\def\BibTeX{{\rm B\kern-.05em{\sc i\kern-.025em b}\kern-.08em
    T\kern-.1667em\lower.7ex\hbox{E}\kern-.125emX}}
\newcommand{\blue}[1]{{\color{black}#1}}
\newcommand{\mc}[1]{\mathcal{#1}}
\newcommand{\mb}[1]{\mathbb{#1}}
\newcommand{\Ra}{\;\;\Rightarrow\;\;}
\newcommand{\ra}{\rightarrow}
\newcommand{\txt}[1]{\text{#1}}
\newcommand{\tbf}[1]{\textbf{#1}}
\newcommand{\mat}[1]{\begin{matrix}#1\end{matrix}} 
\newcommand{\pa}[1]{\left(#1\right)} 
\newcommand{\br}[1]{\left[#1\right]} 
\newcommand{\pmt}[1]{\pa{\mat{#1}}} 
\newcommand{\q}{\quad}
\newcommand{\s}{&}
\newcommand{\teq}{\triangleq}
\newcommand{\eps}{\epsilon}
\newcommand{\rank}[1]{\txt{rank}\pa{#1}}
\newcommand{\dc}{\text{dc}}
\newtheorem{theorem}{Theorem}
\newtheorem{proposition}{Proposition}
\newtheorem{assumption}{Assumption}
\begin{document}

\title{Approximate Sensitivity Conditioning and Singular Perturbation Analysis for Power Converters}

\author{Kaushik~Gajula,~\IEEEmembership{Student Member,~IEEE,} 
Lalit Kishore Marepalli, \IEEEmembership{Student Member, IEEE,} and Luis~Herrera,~\IEEEmembership{Member,~IEEE}
\thanks{This work was supported by the U.S. National Science Foundation (NSF) under award 1855888. }
\thanks{The authors are with the Department of Electrical Engineering, University at Buffalo, Buffalo, NY 14260, USA  (e-mail: lcherrer@buffalo.edu). }
} 
\maketitle

\begin{abstract}
A   sensitivity conditioning control strategy is analyzed in this paper and it is applied to power electronic converters. The   term is used to improve  closed loop systems, such as power converters with cascaded inner and outer loop controllers. The impact of the   sensitivity term is analyzed  using  singular perturbation theory.  In addition, the  implementation of the   control term is addressed for practical systems, where the number of inputs is generally not sufficient for exact  sensitivity conditioning.  Simulation results are presented for a buck converter with output capacitor voltage regulation and a Permanent Magnet Synchronous Machine (PMSM), used as a generator with an active rectifier. Finally, experimental results are presented for the buck converter, demonstrating the advantages and  feasibility in implementing the approximate sensitivity conditioning term for closed loop power converters. 
\end{abstract}

\begin{IEEEkeywords}
Power electronics, Control of power converters, Permanent Magnet Synchronous Machine (PMSM), Singular perturbation, Sensitivity conditioning.
\end{IEEEkeywords}

\section{Introduction}
\IEEEPARstart{P}OWER converters are an essential component of microgrids such as
Plug-in Hybrid Electric Vehicles (PHEVs) \cite{aluisio,aluisio2}, More Electric Aircrafts (MEAs) \cite{giampaolo,xu2019,he2018,kim2018,shirin, 2017HerreraTSG}, electric ships \cite{airan,alfalahi,ali,fang2019,Sulligoi,vu2017}, distribution grids \cite{Zubieta,lvdc,Shenai,Nordman}, etc. Cascaded control is a very popular control strategy for power converters, which consists of PI/PID controllers operating at two different time-scales (inner and outer) \cite{bellinaso2018cascade,liang2019model,pisano2008cascade}. This type of closed loop system can be analyzed using singular perturbation theory, by dividing the states into fast and slow modes \cite{kokotovic}.

Singular perturbation approaches have been used in the analysis and design of closed loop power converters.  One of the main advantages of singular perturbation analysis is the framework it provides for simplifying and reducing the system dynamics, by assuming a sufficiently large time scale separation between slow and fast states. 
In \cite{shen2019singular}, a
dynamic Integrated Energy System (IES) model is proposed based on the singular perturbation theory, considering the dynamic interactions of three energy systems: natural gas, electric power,
and thermal systems. The dynamic IES model, representing a boundary
layer system, is solved and analyzed in three time-scales. In \cite{mchaouar2018nonlinear}, singular perturbation techniques are used for cascaded control of a single phase active filter. In \cite{mchaouar2020new} a synthesized
nonlinear multi-loop controller using singular perturbation technique is proposed, in which the three-time-scale dynamics are artificially induced in the closed-loop system. The proposed nonlinear multi-loop controller is used to control a single-phase grid-connected PV system that includes a full bridge inverter with L-filter. Other applications of singular perturbation analysis and control include dc microgrids with Constant Power Loads (CPL) \cite{gui2021large, liu2021existence, perez2018dc}, motor control \cite{bayardo2020adaptive}, ac microgrids \cite{rasheduzzaman2015reduced, mariani2014model}, power factor correction \cite{kimball2008singular}, etc.

As singular perturbation analysis requires some states to be much faster than the others, sensitivity conditioning control was developed in \cite{picallo2021predictive} as an alternate approach to singular perturbation based high gain feedback \cite{kokotovic}. This sensitivity based conditioning was developed based on recent  algorithms such as gradient trajectory tracking, Newton trajectory tracking, and approximate gradient tracking. In \cite{simonetto2016class}, these algorithms to follow a predictor-corrector scheme to track the solution of a time-varying unconstrained and strongly convex optimization problem. A prediction-correction scheme for solving convex optimization problems with time varying objective and constraint functions was proposed in \cite{fazlyab2017prediction}. Lastly, in \cite{mojica2021stackelberg}, sensitivity conditioning is used to solve a bi-level optimization problem that includes a leader and follower population, obtaining what is known as the Stackelberg equilibrium. \blue{In control systems, the main advantage and motivation for the sensitivity conditioning approach \cite{picallo2021predictive} is to improve the transient response by achieving a strong time-scale separation into fast and slow states, without the need of high gain feedback, which can cause input saturation problems.  However, the actual implementation of the sensitivity conditioning term to practical  systems, where the number of inputs are less than the number of states, has not been analyzed or discussed previously. An exact sensitivity conditioning term cannot be implemented in these systems.}

In this paper, the  sensitivity conditioning control approach presented in \cite{picallo2021predictive} is analyzed from the perspective of singular perturbation theory, i.e. the closed loop system is assumed to compose of slow and fast states, and the impact of the added term is quantified in terms of the time scale separation between the states. The addition of   sensitivity term is analyzed for linear systems, and the closed loop eigenvalues are compared to a case which does not include the sensitivity term. \blue{An approximate sensitivity conditioning term is then proposed to address the implementation of the   term to practical systems where the number of inputs are less than the states}.  Lastly, simulation results are shown for a buck converter with cascaded control and a Permanent Magnet Synchronous Machine (PMSM) used as a generator with active rectification.  Experimental results are also shown for the buck converter, with and without sensitivity conditioning. 

The contributions of this paper are summarized as follows:
\begin{itemize}
    \item The sensitivity conditioning   term proposed in \cite{picallo2021predictive} is analyzed using singular perturbation.  It is shown that as the time scale separation of a system with fast/slow states grows, the impact of the   term is reduced.
    \item Implementation of the   term is discussed for different number of inputs and an approximate sensitivity conditioning is proposed.  The dynamics of systems with approximate sensitivity are simplified and the error of approximate sensitivity conditioning term is also quantified.
    \item Simulation and experimental results are shown for dc/dc converters and a PMSM with active rectification, demonstrating the feasibility of implementation and improvement  of the approximate sensitivity control approach.
\end{itemize}

The paper is organized as follows: in section 2, general singular perturbation analysis for nonlinear and linear systems is presented. In section 3, the  sensitivity control strategy is discussed and its implementation for a linear system is shown in section 4. In section 5, simulation results are presented for case studies, demonstrating the advantage of using approximate sensitivity conditioning strategy for systems with  cascade control. In section 6, experimental results are presented for a closed loop buck converter. Lastly, conclusion and future work are stated in section 7.

\section{Singular Perturbation Overview}

\subsection{Singular Perturbation Modeling}
A nonlinear dynamic system with slow ($x$) and fast $(z)$ states can be described as follows \cite{hkkhalil,kokotovic,o1991singular}:
\begin{align}\label{eq:sing_model}\renewcommand{\arraystretch}{2}
\begin{array}{rl}
     \dfrac{dx}{dt}&=f(x,z),\\
\epsilon \dfrac{dz}{dt}&=g(x,z),
\end{array}
\end{align}
where $x\in\mb{R}^{n_x}$, $z\in\mb{R}^{n_z}$, and $\eps\geq0$ is the singular perturbation parameter modeling the separation of time scales between the two states, where the fast states become instantaneous as $\eps\ra 0$. While at times, $\eps$ can be explicitly decided by the controller design, e.g. in high gain feedback \cite{1977YoungHGF, 1987KhalilHGF}, or as presented in \cite{picallo2021predictive}, the parameter $\eps$ is more commonly used to model the difference in time constants of the two modes \cite{kokotovic}, as will be seen and used in this paper.

Since the dynamics of $z$ are assumed to be much faster than $x$, a boundary layer can be defined based on the quasi-steady state values of $z$ as a function of $x$.  Assuming the existence of isolated equilibrium point (s) of the fast dynamics:
\begin{align}
    0 = g(x,z),
\end{align}
the quasi-steady state can be defined as:
\begin{align}\label{eq:zeqbr}
    z = h(x),
\end{align}
i.e. $h(x)$ satisfies the equation $0 = g\big(x, \;h(x)\big).$ We can then define the deviation of the fast state to this quasi-steady state (known as the boundary layer dynamics) as follows:
\begin{align}\label{eq:qssz}
    y=z-h(x),
\end{align}
which shifts the fast state equilibrium to $\bar{y}=0$. The original system \eqref{eq:sing_model} can be analyzed using the slow states $x$ and $y$ to obtain \cite{kokotovic, hkkhalil}:
\begin{align}\label{eq:boundary_singular}\renewcommand{\arraystretch}{1.15}
\begin{array}{rl}
    \cfrac{dx}{dt}&=f(x,y+h(x)),\\
\epsilon\cfrac{dy}{dt}&=g(x,y+h(x))-\epsilon\dfrac{dh}{dx}f(x,y+h(x)),
\end{array}
\end{align}
where the last term is obtained by differentiating \eqref{eq:qssz} with respect to time.
It is useful to consider a reduced order model by assuming $\eps=0$, i.e. fast states are instantaneous, and considering a different time scale $\tau = t/\eps$, implying $d\tau = dt/\eps$, to obtain:
\begin{align}\renewcommand{\arraystretch}{2}\label{eq:redord}
    \tbf{(reduced system)}\;\; \left\{\begin{array}{l}
         \dfrac{dx}{dt} = f(x,\;h(x)) \\
         \dfrac{dy}{d\tau} = g(\bar{x},\;y+h(\bar{x}))\;\;\
    \end{array}\right.
\end{align}
where $\bar{x}$ is assumed to be constant from the perspective of $\tau$ since $\eps\ra0$.

\subsection{Singular Perturbation for Linear Systems:}
The analysis demonstrated in the previous subsection can be applied to linear systems or a linearized version of \eqref{eq:sing_model}.  In this case, the singular perturbation model is defined as:
\begin{align}\label{eq:sing_lin}\renewcommand{\arraystretch}{2}
\begin{array}{rl}
     \dfrac{dx}{dt} \s= A_{11}x + A_{12}z, \\
     \eps\dfrac{dz}{dt} \s= A_{21}x + A_{22}z.
\end{array}
\end{align}
The quasi-steady state of $z$, described in \eqref{eq:qssz}, can be computed as follows:
\begin{align}\label{eq:h_linear}
    0 = A_{21}x + A_{22}z\Ra z = h(x) = -A_{22}^{-1}A_{21}x. 
\end{align}
The system analyzed in the new coordinates, described by the slow states and the boundary layer dynamics $y = z-h(x)$, is derived as:
\begin{align}
    &\begin{pmatrix}\dot{x}\\\epsilon\dot{y}\end{pmatrix}=&\begin{pmatrix}
    A_{11}-A_{12}A_{22}^{-1}A_{21}&A_{12}\\\Sigma_1&A_{22}+\Sigma_2
    \end{pmatrix}\begin{pmatrix}
    x\\y
    \end{pmatrix},
\end{align}
where
\begin{align}\renewcommand{\arraystretch}{1.25}
\begin{array}{rl}
    \Sigma_1 &= \epsilon \left[A_{22}^{-1}A_{21}A_{11}-A_{22}^{-1}A_{21}A_{12}A_{22}^{-1}A_{21}\right],\\
    \Sigma_2 & = \epsilon\left[A_{22}^{-1}A_{21}A_{12}\right].
    \end{array}
\end{align}

We can then obtain a reduced system by assuming the fast states are instantaneous and using a different time scale $\tau = t/\eps$ as shown below \cite{kokotovic}:
\begin{align}\renewcommand{\arraystretch}{2}\label{eq:redlinear}
    \tbf{(red. linear sys.)}\left\{\begin{array}{l}
         \dfrac{dx}{dt} = \pa{A_{11}-A_{12}A_{22}^{-1}A_{21}}x+A_{12}y \\ 
         \dfrac{dy}{d\tau} = A_{22}y
    \end{array}\right.
\end{align}
since  $\Sigma_{1},\Sigma_2\ra0$ as $\eps\ra 0$.

\section{Sensitivity Conditioning   Control}
\subsection{Sensitivity Conditioning}
Consider a  general nonlinear system  as  follows:
\begin{align}
    \begin{array}{l}
        \dfrac{dx}{dt} = f(x,\;z),\\
        \dfrac{dz}{dt} = g(x,\;z),
    \end{array}
\end{align}
where no assumptions are made on the time scale separation between the two states.  In addition, $x$ and $z$ are assumed to have the same dimensions as stated in the previous section. The sensitivity conditioning term is defined as the change in quasi-steady state \eqref{eq:zeqbr} with respect to time: $\frac{dh(x)}{dt}$. This derivative can be calculated by implicit function theorem \cite{krantz2002implicit}.

\begin{assumption}
The function $g(x,\;z):\mc{X}\times \mc{Z} \ra \mb{R}^{n_z}$ is continuously differentiable for all $x\in\mc{X}\subseteq \mb{R}^{n_x}$ and $z\in\mc{Z}\subseteq\mb{R}^{n_z}$, where $\mc{X}$ and $\mc{Z}$ are open sets, and $\det|\nabla_z g(x,z)|\neq 0$.
\end{assumption}
\begin{theorem}\label{thm:ift}
{(Implicit Function Theorem \cite{krantz2002implicit})}  Under assumption 1, assume there exists a point $(\bar{x},\;\bar{z})$ such that $g(\bar{x},\;\bar{z}) = 0$. Then, there exists neighborhoods $\tilde{\mc{X}}$ and $\tilde{\mc{Z}}$ containing $\bar{x}$ and $\bar{z}$ respectively, such that there exists a function $h:\tilde{\mc{X}}\ra\tilde{\mc{Z}}$ satisfying:
\begin{align}
    g(x,\;h(x)) = 0\;\;\;\forall x\in\tilde{\mc{X}}
\end{align}
\end{theorem}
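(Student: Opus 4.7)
The plan is to establish existence of $h$ via the Banach fixed point theorem applied to a Newton-like iteration in the $z$ variable. First I would fix $M := \nabla_z g(\bar{x},\bar{z})$, which is invertible by Assumption 1, and define the map $T_x(z) := z - M^{-1} g(x,z)$. Fixed points of $T_x$ correspond precisely to zeros of $g(x,\cdot)$, so it suffices to produce such a fixed point for each $x$ near $\bar{x}$ and show it depends (continuously) on $x$.

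Next I would choose radii $r,\delta>0$ small enough that, on the closed ball $\overline{B}_r(\bar{z})\subset\mc{Z}$ and for $x\in B_\delta(\bar{x})\subset\mc{X}$, two properties hold. The first is the contraction property: since $\nabla_z g$ is continuous and $I - M^{-1}\nabla_z g(\bar{x},\bar{z})=0$, a uniform bound $\|I - M^{-1}\nabla_z g(x,z)\|\le \tfrac{1}{2}$ yields $\|T_x(z_1)-T_x(z_2)\|\le \tfrac{1}{2}\|z_1-z_2\|$ via the mean value inequality. The second is the self-map property $T_x(\overline{B}_r(\bar{z}))\subseteq \overline{B}_r(\bar{z})$, which reduces to bounding $\|T_x(\bar{z})-\bar{z}\|=\|M^{-1}g(x,\bar{z})\|$ by $r/2$ using continuity of $g$ together with $g(\bar{x},\bar{z})=0$.

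With both properties in place, the Banach fixed point theorem produces, for each $x\in B_\delta(\bar{x})$, a unique $h(x)\in \overline{B}_r(\bar{z})$ satisfying $T_x(h(x))=h(x)$, equivalently $g(x,h(x))=0$. Setting $\tilde{\mc{X}} := B_\delta(\bar{x})$ and $\tilde{\mc{Z}} := B_r(\bar{z})$ (shrinking $r$ if needed so the image lands in the open ball) gives the required open neighborhoods and map $h:\tilde{\mc{X}}\to\tilde{\mc{Z}}$.

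The main obstacle I anticipate is the simultaneous calibration of $r$ and $\delta$: the contraction estimate constrains how small $r$ must be in terms of the modulus of continuity of $\nabla_z g$, and the self-map estimate then forces $\delta$ to be small relative to $r$, so the two bounds must be chosen in the correct order. Once they are balanced, continuity and indeed $C^1$-regularity of $h$, with $\nabla h(x) = -[\nabla_z g(x,h(x))]^{-1}\nabla_x g(x,h(x))$, follow from the uniform-in-$x$ contraction; this extra regularity is not demanded by the statement itself, but is precisely what legitimizes the time derivative $dh/dt$ used in the sensitivity conditioning term throughout the rest of the paper.
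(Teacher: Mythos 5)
The paper offers no proof of this statement at all---it is quoted verbatim as the Implicit Function Theorem with a citation to \cite{krantz2002implicit}---so there is no in-paper argument to compare against; your proposal must stand on its own, and it does. The contraction-mapping route you outline is the classical proof: freezing $M=\nabla_z g(\bar{x},\bar{z})$ (invertible under Assumption 1), showing $T_x(z)=z-M^{-1}g(x,z)$ is a uniform $\tfrac12$-contraction on a small closed ball because $I-M^{-1}\nabla_z g$ vanishes at the base point and $\nabla_z g$ is continuous, and securing the self-map property from $\|M^{-1}g(x,\bar z)\|\le r/2$ via continuity of $g$ and $g(\bar x,\bar z)=0$. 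Your ordering of the calibration is right: $r$ is fixed first by the modulus of continuity of $\nabla_z g$, then $\delta$ is chosen relative to $r$. The one point worth making explicit is the passage from the closed ball (where Banach's theorem places the fixed point) to the open neighborhood $\tilde{\mc{Z}}$ demanded by the statement; your parenthetical fix works because the self-map estimate actually gives $\|h(x)-\bar z\|\le 2\|M^{-1}g(x,\bar z)\|<r$ for $x$ sufficiently close to $\bar x$. What your approach buys beyond the bare citation is exactly the regularity the rest of the paper silently relies on: the uniform-in-$x$ contraction yields continuity and then differentiability of $h$ with $\nabla h(x)=-\left(\nabla_z g(x,h(x))\right)^{-1}\nabla_x g(x,h(x))$, which is the sensitivity formula \eqref{eq:hderiv} and the justification for writing $dh/dt$ in the conditioning term.
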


The implicit function theorem thus implies the existence of the function $z=h(x)$ as described in \eqref{eq:zeqbr}.  In addition, Theorem \ref{thm:ift} can be used to obtain the following derivative:
\begin{align}\label{eq:hderiv}
    \frac{dh(x)}{dx} = -\pa{\nabla_z g(x,\;z)}^{-1}\nabla_x g(x,\;z) \teq S^{z}_x(x,\;z),
\end{align}
where $S^{z}_x(x,\;z)$ is defined as the sensitivity of quasi steady state, $z=h(x)$, as a function of $x$ in \cite{picallo2021predictive}. The main advantage of \eqref{eq:hderiv} is that the function $h(x)$ is not explicitly needed to obtain its derivative with respect to $x$. The derivative of $h(x(t))$ with respect to time is obtained using \eqref{eq:hderiv} as follows:
\begin{align}\label{eq:senscond}
    \frac{dh(x(t))}{dt} = \frac{dh}{dx}\frac{dx}{dt} = S^z_s(x,\;z)\frac{dx}{dt}= S^z_s(x,\;z)f(x,\;z).
\end{align}
The  sensitivity  conditioning   term proposed in \cite{picallo2021predictive} can now be obtained as follows:
\begin{align}\label{eq:pred_model}\renewcommand{\arraystretch}{1.15}
\tbf{(Sensitivity Cond.)}\left\{\begin{array}{rl}
    \dfrac{dx}{dt}&=f(x,z),\\
    \dfrac{dz}{dt}&=g(x,z)+\underbrace{S_{x}^z(x,z)}_{\frac{dh}{dx}}\underbrace{f(x,z)}_{\frac{dx}{dt}}.
\end{array}\right.
\end{align}
The additional term, $\frac{dh}{dt}$, added to the fast states dynamics can be seen as a   term anticipating the dynamics of the quasi-steady state $h(x(t))$. 

Comparing \eqref{eq:pred_model} with \eqref{eq:boundary_singular}, the additional sensitivity term attempts to cancel the same component in the boundary layer dynamics. This can be seen by applying the change of coordinates:
\begin{align}
    y=z-h(x),
\end{align}
to obtain the boundary layer dynamics for the model with  sensitivity conditioning \eqref{eq:pred_model}:
\begin{align}\renewcommand{\arraystretch}{1.15}
\s\left\{\begin{array}{rl}
    \dfrac{dx}{dt}&=f(x,y+h(x)) \\
    \dfrac{dy}{dt}&=g(x,y+h(x))+{\dfrac{dh}{dx}}f(x,y+(x)) \\&\q-\cfrac{dh}{dx}f(x,y+h(x)),
\end{array}\right. \\
\Ra
\s \left\{\begin{array}{rl}
     \dfrac{dx}{dt}&=f(x,y+h(x))\\
    \dfrac{dy}{dt}&=g(x,y+h(x)),
\end{array}\right.
\end{align}
\noindent which is similar to \eqref{eq:redord}  for the case when $\epsilon \rightarrow 0$, i.e. the reduced order system, without the need to introduce a different time scale.

\subsection{Analysis for Systems with  Time Scale Separation}\label{sec:singpert}
We now consider the addition of sensitivity conditioning term \eqref{eq:hderiv} to systems that exhibit large degree of time scale separation i.e. with fast and slow dynamics, modeled in singular perturbation form as shown in \eqref{eq:sing_model}. The singular perturbation parameter, $\eps$, is used to model different time scale separation between the fast and slow states. It will be demonstrated in the simulation results that a smaller $\eps$ is associated with a larger separation between the eigenvalues of fast and slow states.

The system in \eqref{eq:sing_model} is modified as follows to compute the sensitivity term:
\begin{align}\renewcommand{\arraystretch}{1.15}\label{eq:newsingsys}    
    \begin{array}{rl}
        \dfrac{dx}{dt}&=f(x,\;z)\\
        \dfrac{dz}{dt}&=\dfrac{1}{\epsilon}g(x,\;z) = \tilde{g}(x,\;z). 
   \end{array}
\end{align}
The sensitivity term \eqref{eq:hderiv} is computed for \eqref{eq:newsingsys} as follows:
\begin{align}\renewcommand{\arraystretch}{1.15}
S^{z}_x(x,\;z) &= -\pa{\nabla_z \tilde{g}(x,\;z)}^{-1}\nabla_x \tilde{g}(x,\;z) \\
\Ra S^z_{x}(x,\;z) \s= -\eps\pa{\nabla_z g(x,\;z)}^{-1}\frac{1}{\eps}\nabla_x g(x,\;z) \\
\Ra S^{z}_x(x,\;z)\s=-\pa{\nabla_z g(x,\;z)}^{-1}\nabla_x g(x,\;z).
\end{align}
Therefore, it can be seen that the singular perturbation parameter, $\eps$, does not change the sensitivity term for systems modeled in singular perturbation form. Adding the sensitivity conditioning \eqref{eq:senscond} to \eqref{eq:newsingsys}, we can obtain:
\begin{align}
    \s\left\{\begin{array}{rl}
        \dfrac{dx}{dt}&=f(x,z),\\
        \dfrac{dz}{dt}&=\dfrac{1}{\eps}g(x,z)+\underbrace{S_{x}^z(x,z){f(x,z)}}_{\frac{dh}{dt}},
    \end{array}\right.\\
    \Ra\s\left\{\begin{array}{rl}
        \dfrac{dx}{dt}&=f(x,z),\\
        \eps\dfrac{dz}{dt}&=g(x,z)+\eps \underbrace{S_{x}^z(x,z){f(x,z)}}_{\frac{dh}{dt}}.
    \end{array}\right. \label{eq:singsens}
\end{align}

We can now state the following proposition regarding the impact of sensitivity conditioning term for systems with  time scale separation.
\begin{proposition}\label{prop:epssens}
     The impact of sensitivity conditioning term, $\frac{dh}{dt} = S^{z}_x(x,\;z)f(x,\;z)$, when added to the fast states dynamics, $\dot{z}$, converges to zero as $\eps\ra 0$ i.e. as the time scale separation between slow ($x$) and fast states ($z$) grows, the impact of the sensitivity conditioning is reduced.
\end{proposition}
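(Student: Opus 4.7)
The plan is to read the impact of the sensitivity term directly off the standard-form equation \eqref{eq:singsens}, where the singular-perturbation parameter $\eps$ has been pulled in front of $\dot z$. In that form the fast-state equation becomes
\begin{align*}
\eps \frac{dz}{dt} = g(x,z) + \eps\, S_x^z(x,z)\, f(x,z),
\end{align*}
so the added sensitivity correction carries an explicit coefficient of $\eps$ while the original vector field $g(x,z)$ is $O(1)$. I would emphasize that this weighting arises because $S_x^z$ was computed in \eqref{eq:hderiv} for the rescaled right-hand side $\tilde{g} = g/\eps$, and the $\eps$ cancels on the left when the equation is re-multiplied through to reach standard singular-perturbation form.

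Next, to confirm that the vanishing of this $O(\eps)$ correction is reflected in the boundary-layer dynamics, I would apply the change of coordinates $y = z - h(x)$, exactly as done between \eqref{eq:boundary_singular} and \eqref{eq:redord}. The resulting boundary-layer equation takes the form
\begin{align*}
\eps \frac{dy}{dt} = g(x, y+h(x)) + \eps\Bigl[S_x^z(x,y+h(x)) - \tfrac{dh}{dx}(x)\Bigr] f(x, y+h(x)),
\end{align*}
in which the bracketed term accounts for the difference between the sensitivity evaluated off the equilibrium manifold and the derivative of $h$. Rescaling time by $\tau = t/\eps$ and sending $\eps \to 0$ recovers precisely the reduced model \eqref{eq:redord}, showing that the sensitivity correction becomes inactive in the limit.

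The main obstacle, if one can call it that, is purely book-keeping: one must track carefully that the $\eps$ which multiplied $\dot z$ in the original singular-perturbation model \eqref{eq:sing_model} is what forces the sensitivity term to inherit an $\eps$ prefactor once the equation is placed back in standard form. A secondary subtlety is that $S_x^z(x,z)$ agrees with $dh/dx$ only on the manifold $z = h(x)$; but since both entries of the bracket in the displayed boundary-layer equation appear pre-multiplied by $\eps$, no detailed cancellation on the manifold is required, and the conclusion follows uniformly from the $O(\eps)$ bound.
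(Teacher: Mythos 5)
Your proposal is correct and takes essentially the same approach as the paper: the paper's entire proof consists of your first step, namely reading off from \eqref{eq:singsens} that the added term carries an explicit factor of $\eps$ while $g(x,z)$ is $O(1)$, so the correction vanishes as $\eps\ra 0$. Your additional boundary-layer verification via $y = z - h(x)$ is consistent and correct but goes beyond what the paper records for this proposition.
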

\begin{proof}
It can be seen from \eqref{eq:singsens} that the fast dynamics are simplified as follows
\begin{align}
    \epsilon\frac{dz}{dt}=g(x,z)+\epsilon S_x^z(x,z)f(x,z),
\end{align}
where the right hand term, $\epsilon S_x^z(x,z)f(x,z)$, converges to zero as $\eps\ra 0$ i.e. the impact of this term to the system dynamics is reduced as the time scale separation grows.
\end{proof}

Hence, it can be concluded that the addition of sensitivity conditioning term improves the performance of systems as long as the time scale separation between slow ($x$) and fast ($z$) states is not significant.

\subsection{Sensitivity Conditioning for Linear Systems}
Without making any assumptions on the time scale separation between the two states $x$ and $z$, we define a linear system as follows:
\begin{align}\label{eq:lin}\renewcommand{\arraystretch}{1.4}
    \begin{pmatrix}\dot{x}\\\dot{z}\end{pmatrix}=\begin{pmatrix}
    f(x,z)\\g(x,z)
    \end{pmatrix}=\begin{pmatrix}
    A_{11}&A_{12}\\A_{21}&A_{22}
    \end{pmatrix}\begin{pmatrix}
    x\\z
    \end{pmatrix}.
\end{align}
The sensitivity of the quasi-steady state of $z$ with respect to $x$ can be computed using \eqref{eq:sing_lin}, \eqref{eq:hderiv} as follows:
\begin{align}
    S_{x}^z(x,z)=-A_{22}^{-1}A_{21}.
\end{align}
The sensitivity conditioning term, $\frac{dh}{dt}$, can be simplified as:
\begin{align}\label{eq:lin_ps_term}
S_{x}^z(x,z)f(x,z)=-A_{22}^{-1}A_{21}\left[A_{11}x+A_{12}z\right].
\end{align}
Adding this   term to the linear system in \eqref{eq:lin}, similar to \eqref{eq:pred_model}, the dynamics are modified as shown:
\begin{align}\label{eq:linearwsc}
\pmt{\dot{x}\\ \dot{z}} = \pmt{A_{11}\s A_{12} \\ 
A_{21}-A_{22}^{-1}A_{21}A_{11}\s A_{22}-A_{22}^{-1}A_{21}A_{12}}\pmt{x\\ z}.
\end{align}
The advantages of adding a sensitivity conditioning term are not directly seen from \eqref{eq:linearwsc}. A deeper insight is  gained by changing coordinates to the boundary layer dynamics, $y = z-h(x)$. This results in the following proposition (similar to proposition 2 in \cite{picallo2021predictive}).

\begin{proposition}\label{prop:prop2}
     The linear system with sensitivity conditioning in \eqref{eq:linearwsc} has the following eigenvalues:
     \begin{align}
         \sigma(A_{11}-A_{12}A_{22}^{-1}A_{21})\cup \sigma(A_{22})
     \end{align}
     where $\sigma(A)$ is defined as the  set of eigenvalues of a square matrix $A$.
\end{proposition}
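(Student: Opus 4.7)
The plan is to use the change of coordinates $y = z - h(x)$ suggested in the discussion preceding the proposition. Since (\ref{eq:h_linear}) gives $h(x) = -A_{22}^{-1}A_{21}x$ in the linear case, this becomes $y = z + A_{22}^{-1}A_{21}x$. This corresponds to a similarity transformation by the unit lower triangular (hence invertible) matrix $T = \begin{pmatrix} I & 0 \\ A_{22}^{-1}A_{21} & I \end{pmatrix}$, so the eigenvalues of the closed-loop matrix in (\ref{eq:linearwsc}) equal those of its transformed counterpart in the $(x,y)$ coordinates. No assumption on any small parameter is needed, since this is an exact algebraic change of variables.

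Next, I would derive the dynamics in $(x,y)$. Substituting $z = y - A_{22}^{-1}A_{21}x$ into $\dot{x} = A_{11}x + A_{12}z$ immediately gives $\dot{x} = (A_{11} - A_{12}A_{22}^{-1}A_{21})x + A_{12}y$. For $\dot{y} = \dot{z} + A_{22}^{-1}A_{21}\dot{x}$, I would plug in the second row of (\ref{eq:linearwsc}) for $\dot{z}$ together with the original $\dot{x} = A_{11}x + A_{12}z$, then collect terms. The key observation I expect is that the two extra pieces appearing in the modified $\dot{z}$ dynamics, namely $-A_{22}^{-1}A_{21}A_{11}x$ and $-A_{22}^{-1}A_{21}A_{12}z$, are exactly cancelled by the corresponding contributions of $A_{22}^{-1}A_{21}\dot{x} = A_{22}^{-1}A_{21}(A_{11}x + A_{12}z)$. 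What remains is $\dot{y} = A_{21}x + A_{22}z$, which collapses to $\dot{y} = A_{22}y$ after a second substitution of $z = y - A_{22}^{-1}A_{21}x$ (the $A_{21}x$ terms cancel).

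The system matrix in the new coordinates is therefore block upper triangular, $\begin{pmatrix} A_{11} - A_{12}A_{22}^{-1}A_{21} & A_{12} \\ 0 & A_{22} \end{pmatrix}$, whose spectrum is the union of the spectra of its diagonal blocks, yielding the claim. The main obstacle is really just the algebraic bookkeeping in the $\dot{y}$ calculation: the double cancellation that decouples $y$ from $x$ is the structural heart of the proof, and reflects exactly the design intent of the sensitivity conditioning term, which is constructed so that $\tfrac{dh}{dt}$ cancels the forcing of $y$ by $x$ that would otherwise appear in the boundary layer dynamics.
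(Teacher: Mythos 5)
Your proposal is correct and follows essentially the same route as the paper: the change of coordinates $y = z - h(x) = z + A_{22}^{-1}A_{21}x$ (the paper's transformation $T$ in \eqref{eq:transformation} is just the inverse of yours, which is immaterial for the similarity argument), leading to the block upper triangular form \eqref{eq:exactxy} whose spectrum is $\sigma(A_{11}-A_{12}A_{22}^{-1}A_{21})\cup\sigma(A_{22})$. The cancellation you identify in the $\dot{y}$ computation is exactly what the paper exploits.
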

\begin{proof}
The result can be obtained by performing the following change of coordinates based on the boundary layer dynamics, $y = z-h(x) = z+A_{22}^{-1}A_{21}x$, as follows:
\begin{align}\label{eq:transformation}
    \pmt{x\\ z} = \pmt{I \s 0 \\ -A_{22}^{-1}A_{21}\s I}\pmt{x \\ y}\teq T \pmt{x \\ y}
\end{align}
to the system  in \eqref{eq:linearwsc}. In the new coordinates, we can obtain:
\begingroup
\begin{align}\label{eq:coordchange}
    \pmt{\dot{x} \\ \dot{y}} = T^{-1} A_{sc}T\pmt{x \\ y}
\end{align}
\endgroup
where
\begin{align}
    A_{sc} \teq \pmt{A_{11}\s A_{12} \\ 
A_{21}-A_{22}^{-1}A_{21}A_{11}\s A_{22}-A_{22}^{-1}A_{21}A_{12}}.
\end{align}
Simplifying \eqref{eq:coordchange}, we can obtain:
\begin{align}\label{eq:exactxy}
     \pmt{\dot{x} \\ \dot{y}} = \pmt{A_{11}-A_{12}A_{22}^{-1}A_{21} \s A_{12}\\ 0 \s A_{22}}\pmt{x \\ y}
\end{align}
and the desired result follows: $\sigma(T^{-1}A_{sc}T) = \sigma(A_{sc}) = \sigma(A_{11}-A_{12}A_{22}^{-1}A_{21})\cup \sigma(A_{22})$
\end{proof}
Therefore, it can be seen that the additional sensitivity conditioning term helps the closed loop system achieve the same behavior as the reduced order model shown in \eqref{eq:redlinear}.

\section{  Implementation for Linear Systems}
The implementation of sensitivity conditioning term for linear systems is discussed in this section. \blue{The analysis presented can be applied directly to linear systems. While it can also be used in linearized models of nonlinear systems, the sensitivity conditioning term may have to be recomputed at every linearization point (i.e. steady state).} Furthermore, the addition of this   term is shown to be dependent on the number of inputs in the system and in most cases an exact implementation is not feasible.  For this case,  an approximate implementation is proposed.

\subsection{Exact Sensitivity Conditioning}
Prior to the implementation of the sensitivity conditioning term, an open loop linear system is defined as follows:
\begin{align}\label{eq:linsyswctrl}
\begin{array}{rl}
    \dot{x}&=A_{11}x+A_{12}z,\\
    \dot{z}&=\tilde{A}_{21}x+\tilde{A}_{22}z+Bu,
    \end{array}
\end{align}
where $x\in\mb{R}^{n_x},\;z\in\mb{R}^{n_z}$, $u\in\mb{R}^m$, $B\in\mb{R}^{n_z\times m}$, and the matrices $A_{11},\;A_{12},\;\tilde{A}_{21}$, $\tilde{A}_{22}$ are of appropriate dimensions.  

\begin{figure}[!t]
	\begin{center}
		\centering
		\includegraphics[width=0.47\textwidth]{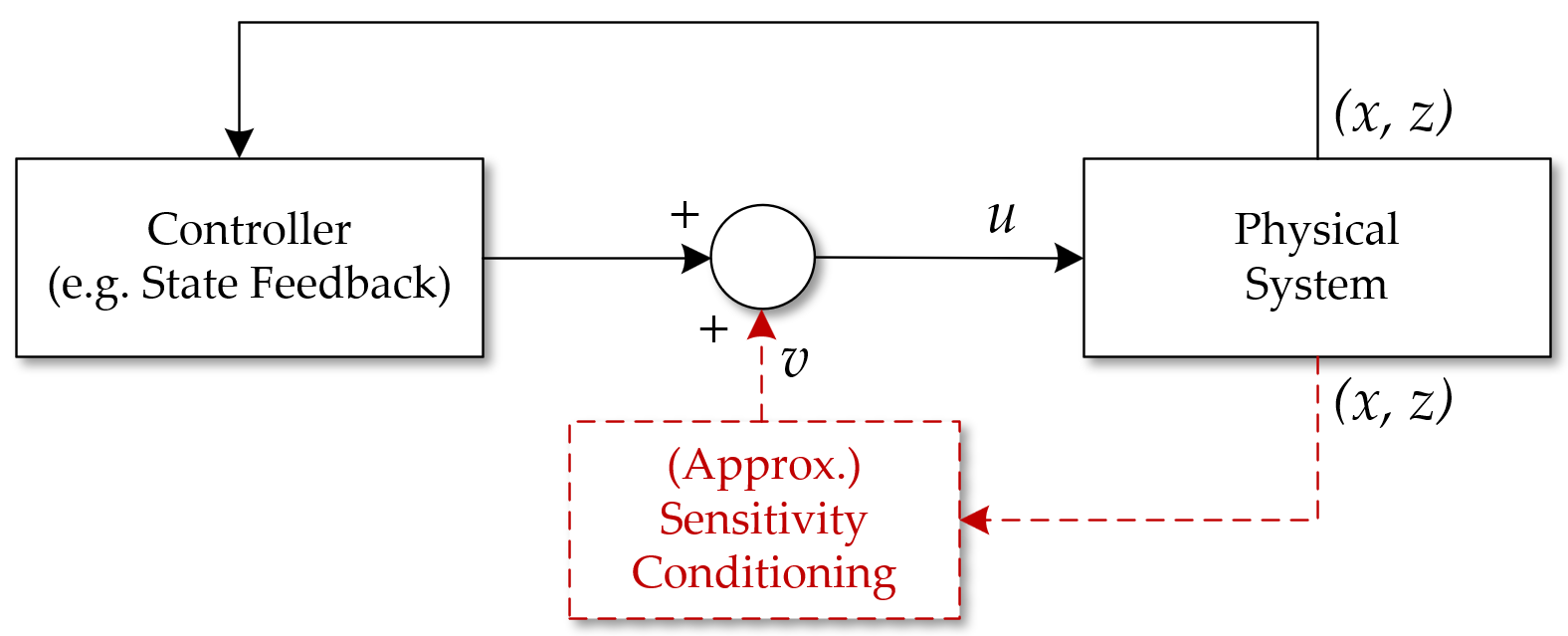}
		\caption{High level overview of sensitivity conditioning   implementation.}\label{fig:overviewimpl}
	\end{center}
\end{figure}

Assume the input is a state feedback controller of the form:
\begin{align}\label{eq:linctrl}
    u = K_x x+ K_zz + v
\end{align}
where $K_x$ and $K_z$ are chosen such that the closed loop system is asymptotically stable and the new input $v\in\mb{R}^m$ will be used to implement the   sensitivity conditioning term, as shown in Fig. \ref{fig:overviewimpl}.

Adding \eqref{eq:linctrl} to \eqref{eq:linsyswctrl}, we obtain the following closed loop system:
\begin{align}\label{eq:closed_ps}
\begin{array}{rl}
\dot{x}&=A_{11}x+A_{12}z,\\
\dot{z}&=A_{21}x+A_{22}z+Bv,
\end{array}
\end{align}
where
\begin{align}
\begin{array}{rl}
A_{21}&= \tilde{A}_{21}+BK_x,\\    
A_{22}&= \tilde{A}_{22}+BK_z.
\end{array}
\end{align}

The analysis in section 2 (singular perturbation) corresponds to \eqref{eq:closed_ps} when $v=0$.  The addition of the sensitivity conditioning term shown in \eqref{eq:hderiv} and \eqref{eq:pred_model}, will be decided by the new input $v\in\mb{R}^m$.
\begin{assumption}
The $\rank{B} = \dim(u) = m$ whenever $m\leq n_z$ and the $\rank{B} = n_z$ for $m > n_z$.
\end{assumption}
Therefore,  the   term, $v$, can be obtained as follows:
\begin{align}\label{eq:PS_term}
    Bv(t)&=S_x^z\big(x(t),\;z(t)\big)f\big(x(t),\;z(t)\big) && \txt{(general form)},\\
    Bv(t)&=-A_{22}^{-1}A_{21}\left[A_{11}x(t)+A_{12}z(t)\right]&&\txt{(linear case)}.\label{eq:pslinear}
\end{align}

Solving for $v$ thus amounts to computing the solution of a linear system of equations for different values of $x(t)$ and $z(t)$.  
The system of equations shown in \eqref{eq:PS_term} or \eqref{eq:pslinear} has a solution for $v(t),\;\forall t\in\mb{R}$, not necessarily unique, if the linear mapping defined by the matrix $B$, i.e. $L(v)\teq Bv:\mb{R}^m\ra\mb{R}^{n_z}$, is onto.
By definition, the mapping $L(v) = Bv$ is onto if for all $b\in\mb{R}^{n_z}$ there exists a $v\in\mb{R}^m$ such that $Bv = b$, where in this case, $b = S(x,\;z)f(x,\;z)$.
We can now consider two cases, associated with the dimension of $B$ and $u$:
\begin{itemize}
    \item \tbf{Case 1}: The number of inputs $u$ is equal to the dimension of $z$, i.e. $m=n_z$. In this case, the matrix $B\in\mb{R}^{n_z\times n_z}$  is invertible by assumption 2.  The unique solution to \eqref{eq:pslinear} is then:
    \begin{align}\label{eq:case1}
        v = B^{-1}S_x^z(x,z)f(x,z).
    \end{align}
    \item \tbf{Case 2}:  Now consider the case where $m>n_z$, i.e. there are more inputs $u$ than  states $z$. Under assumption 2, this corresponds to an under-determined system of equations with an infinite number of solutions. These can be characterized using the Moore-Penrose pseudoinverse as follows \cite{barata2012moore}:
    \begin{align}\label{eq:case2}
        v = B^{\dagger}_RS_x^z(x,z)f(x,z) + (I-B^\dagger_R B)p
    \end{align}
    where $B^\dagger_R \teq B^T\pa{BB^T}^{-1}$ and $p\in\mb{R}^m$ is an arbitrary vector (can be set to zero). The subscript $R$ is used to denote that $B$ is right invertible i.e. $BB^\dagger_R =  I$.
\end{itemize}
For the  two cases shown in \eqref{eq:case1} and \eqref{eq:case2}, the term $v$ satisfies \eqref{eq:PS_term} or \eqref{eq:pslinear} exactly.  This implies that eigenvalues of the closed loop system with the   conditioning term are the same as those shown in Proposition \ref{prop:prop2}. In addition, an exact implementation of the sensitivity conditioning term can be achieved only if the number of inputs is greater than or equal to the number of states $z$.  

\subsection{Approximate Sensitivity Conditioning}
For most practical cases, the number of inputs/unknowns given by $\dim(u) = m$, is less than $\dim(z)=n_z$ or number of equations, i.e. $m<n_z$. This case corresponds to an over-determined system of equations where generally no solution exists. Therefore, assuming the range of sensitivity conditioning term \eqref{eq:hderiv} spans all of $\mb{R}^{n_z}$, it is not possible to find a   vector, $v\in\mb{R}^m$, satisfying \eqref{eq:PS_term} or \eqref{eq:pslinear}.

In this case, we consider a least squares approximation to the system of equations in \eqref{eq:PS_term} or \eqref{eq:pslinear} as follows:
\begin{align}\label{eq:lsprob}
    \min_{v\in\mb{R}^m}\;\;||Bv-S^z_x(x,\;z)f(x,\;z)||_2^2
\end{align}
which has an analytical solution of the form:
\begin{align}\label{eq:apprxsc}
    v = B^{\dagger}_LS_x^z(x,z)f(x,z),
\end{align}
where $B^\dagger_L = \left(B^TB\right)^{-1}B^T,$ and the subscript $L$ is used to denote $B$ is left invertible i.e. $B^\dagger_LB = I$. However,  it is emphasized the approximate   term $v$ in \eqref{eq:apprxsc} satisfies:
\begin{align}
    Bv = S^z_x(x,\;z)f(x,\;z) + e(x,\;z),
\end{align}
where $e(x,\;z)\in\mb{R}^{n_z}$ is an error term whose two-norm squared is minimized by \eqref{eq:lsprob}. For linear systems, using the sensitivity conditioning term shown in \eqref{eq:pslinear}, the error term can be simplified as follows:
\begin{align}\label{eq:errorgeneral}
    e(x,\;z) \s= Bv - S^z_x(x,\;z)f(x,\;z)\\
    \Ra e(x,\;z) \s = (I-BB^\dagger_L)A_{22}^{-1}A_{21}\underbrace{\pa{A_{11}x+A_{12}z}}_{\dot{x}(t)}.\label{eq:errorlinear}
\end{align}

We can analyze the closed loop system with the approximate sensitivity conditioning term \eqref{eq:apprxsc}.  Substituting \eqref{eq:apprxsc} in \eqref{eq:closed_ps}, we obtain the following:
\begin{align}
\s\left\{
\begin{array}{rl}
\dot{x}&=A_{11}x+A_{12}z\\
\dot{z}&=A_{21}x+A_{22}z+BB^{\dagger}_LS_x^z(x,z)f(x,z).
\end{array}\right. \\\label{eq:linearwasc}
\Ra \s\left\{\begin{array}{rl}
\dot{x}&=A_{11}x+A_{12}z \\
\dot{z}&=\pa{A_{21}-BB^{\dagger}_LA_{22}^{-1}A_{21}A_{11}}x\\&+\pa{A_{22}-BB^{\dagger}_LA_{22}^{-1}A_{21}A_{12}}z.
\end{array}\right.
\end{align}
 
\begin{proposition}
     The steady state of closed loop linear system in \eqref{eq:lin} or \eqref{eq:closed_ps} with $v=0$, is the same as the steady state for the system in \eqref{eq:linearwasc} with approximate sensitivity conditioning term. 
\end{proposition}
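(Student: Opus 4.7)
The plan is to compare the algebraic equations characterizing equilibria of the two closed loop systems and show they coincide. I would start by writing down the steady-state conditions for \eqref{eq:closed_ps} with $v=0$, namely $A_{11}x^\ast + A_{12}z^\ast = 0$ and $A_{21}x^\ast + A_{22}z^\ast = 0$, and then do the same for \eqref{eq:linearwasc}: the first equation is identical, while the second reads
\begin{align}
0 = \left(A_{21} - BB^{\dagger}_{L}A_{22}^{-1}A_{21}A_{11}\right)x^\ast + \left(A_{22} - BB^{\dagger}_{L}A_{22}^{-1}A_{21}A_{12}\right)z^\ast.
\end{align}

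The key observation is that the additional term in \eqref{eq:linearwasc}, namely $BB^{\dagger}_{L} S_{x}^{z}(x,z) f(x,z) = -BB^{\dagger}_{L} A_{22}^{-1}A_{21}(A_{11}x + A_{12}z)$, depends on $(x,z)$ only through the combination $A_{11}x + A_{12}z = \dot{x}$. Conceptually this just restates that the approximate sensitivity conditioning term is a function of $\dot{x}$, and since $\dot{x}=0$ at any equilibrium of the slow dynamics, the entire extra term vanishes. Hence the second steady-state equation collapses to $A_{21}x^\ast + A_{22}z^\ast = 0$, recovering the steady state of the original system with $v=0$.

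To present this cleanly I would factor the second steady-state equation as
\begin{align}
A_{21}x^\ast + A_{22}z^\ast - BB^{\dagger}_{L} A_{22}^{-1} A_{21}\bigl(A_{11}x^\ast + A_{12}z^\ast\bigr) = 0,
\end{align}
and then substitute the first steady-state condition $A_{11}x^\ast + A_{12}z^\ast = 0$ to eliminate the parenthesized quantity. What remains is precisely $A_{21}x^\ast + A_{22}z^\ast = 0$, matching the second equilibrium condition of \eqref{eq:closed_ps} with $v=0$. An equivalent viewpoint, worth mentioning as a remark, is that the error $e(x,z)$ in \eqref{eq:errorlinear} and the ideal sensitivity term itself both vanish whenever $\dot{x}=0$, so neither the exact nor the approximate conditioning alters the equilibrium set.

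There is no real obstacle here; the proposition is essentially an observation that the conditioning term is proportional to $\dot{x}$ and therefore inactive at equilibrium. The only subtlety worth flagging is the implicit assumption that the equilibrium exists, which is guaranteed by Assumption 1 (invertibility of $\nabla_z g$, here $A_{22}$) so that $z^\ast = -A_{22}^{-1}A_{21}x^\ast$ is well defined whenever $x^\ast$ is.
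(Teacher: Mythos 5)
Your proposal is correct and follows essentially the same route as the paper: write the equilibrium equations for both systems and observe that the added term $-BB^{\dagger}_{L}A_{22}^{-1}A_{21}\left(A_{11}x^{*}+A_{12}z^{*}\right)$ vanishes because the first steady-state condition forces the parenthesized factor to zero. The remark that the conditioning term is proportional to $\dot{x}$ is a nice conceptual framing, but the underlying computation is identical to the paper's proof.
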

\begin{proof}
    Define the steady state for \eqref{eq:lin} without approximate sensitivity conditioning as $\pa{x^*,\;z^*}$, satisfying:
    \begin{align}\label{eq:sstate}
        \s\begin{array}{rl}
            0 \s= A_{11}x^* + A_{12}z^*,\\
            0 \s= A_{21}x^* + A_{22}z^*.
        \end{array}
    \end{align}
    With approximate sensitivity conditioning, the steady state can be computed as:
    \begin{align}
        \begin{array}{l}
            0 = A_{11}x^* + A_{12}z^*,\\
            0 = A_{21}x^* + A_{22}z^* - BB^\dagger_LA_{22}^{-1}A_{21}\underbrace{\pa{A_{11}x^* + A_{12}z^*}}_{=0},
        \end{array}
    \end{align}
    which implies that steady state $(x^*,\;z^*)$ satisfies the same equations as \eqref{eq:sstate}.
\end{proof}

\begin{proposition}
     The error term defined in \eqref{eq:errorgeneral} and \eqref{eq:errorlinear} is zero at steady state.
\end{proposition}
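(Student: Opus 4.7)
The plan is to exploit the fact that, as established in Proposition 3, the steady state $(x^*, z^*)$ of the closed loop system with approximate sensitivity conditioning coincides with the steady state of the system in \eqref{eq:closed_ps} with $v = 0$. In particular, the first of the two steady state equations reads
\begin{align*}
    0 = A_{11} x^* + A_{12} z^*,
\end{align*}
i.e.\ $\dot{x}|_{(x^*, z^*)} = 0$. The key observation is that the error term in \eqref{eq:errorlinear} contains the factor $(A_{11} x + A_{12} z)$, which is precisely $\dot{x}$ in the linear model.

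First I would write down the explicit form of the error from \eqref{eq:errorlinear},
\begin{align*}
    e(x, z) = (I - B B^\dagger_L) A_{22}^{-1} A_{21} \pa{A_{11} x + A_{12} z},
\end{align*}
and evaluate it at $(x^*, z^*)$. Substituting the steady state identity $A_{11} x^* + A_{12} z^* = 0$ directly into this expression annihilates the right-hand side, yielding $e(x^*, z^*) = 0$.

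For completeness I would also confirm the claim at the level of the more general (nonlinear) definition in \eqref{eq:errorgeneral}: by \eqref{eq:apprxsc} we have $Bv = B B^\dagger_L S_x^z(x, z) f(x, z)$, so
\begin{align*}
    e(x, z) = \pa{B B^\dagger_L - I} S_x^z(x, z) f(x, z).
\end{align*}
At the steady state, $f(x^*, z^*) = \dot{x}|_{(x^*, z^*)} = 0$, which makes the entire expression vanish regardless of the value of $(I - B B^\dagger_L) S_x^z(x^*, z^*)$. Since there are no obstacles beyond a clean algebraic substitution, the proof is essentially a one-line calculation; the only subtlety worth emphasizing is that the vanishing depends on $f$ being evaluated at the shared steady state guaranteed by Proposition 3, not on any property of the projector $I - B B^\dagger_L$ itself.
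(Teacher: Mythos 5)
Your proof is correct and follows essentially the same route as the paper: substitute the steady-state identity $A_{11}x^* + A_{12}z^* = 0$ (equivalently $f(x^*,z^*)=0$) into the error expression \eqref{eq:errorlinear}, which annihilates it regardless of the factor $(I-BB^\dagger_L)A_{22}^{-1}A_{21}$. The additional remark covering the general form \eqref{eq:errorgeneral} is a harmless elaboration of the same one-line argument.
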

\begin{proof}
    This result follows from \eqref{eq:sstate} since at steady state, $(A_{11}x^* + A_{12}z^*) =0$, which implies:
    \begin{align}
        e(x^*,\;z^*) \s = (I-BB^\dagger_L)A_{22}^{-1}A_{21}\underbrace{\pa{A_{11}x^*+A_{12}z^*}}_{=0}\\
         \s = 0.
    \end{align}
\end{proof}

A change of coordinates defined by the boundary layer dynamics, $y = z-h(x) = z+ A_{22}^{-1}A_{21}x$, whose transformation matrix $T$ is shown in \eqref{eq:transformation}, can be applied to simplify dynamics of the closed loop system with approximate sensitivity conditioning  in \eqref{eq:linearwasc}. This system is transformed to:
\begin{align}\label{eq:approxxy}
    \left\{\begin{array}{rl}
    \dot{x} \s=\pa{A_{11} - A_{12}A_{22}^{-1}A_{21}}x + A_{12}y \\
    \dot{y} \s= A_{22}y + e(x,\;y) 
    \end{array}\right.
\end{align}
where $e(x,\;y)$ is obtained from \eqref{eq:errorlinear} as follows:
\begin{align}\label{eq:errorxy}
\begin{array}{rl}
    e(x,\;y) \s= (I-BB^\dagger_L)(A_{22}^{-1}A_{21})\pa{A_{11}-A_{12}A_{22}^{-1}A_{21}}x\\
    \s + (I-BB^\dagger_L)\pa{A_{22}^{-1}A_{21}}A_{12}y
\end{array}
\end{align}

Therefore, it can be seen that the approximate sensitivity conditioning dynamics in \eqref{eq:approxxy} are similar to the exact case \eqref{eq:exactxy}, except for the error term. Moreover, if the error is small, the approximation will provide a similar improvement to the closed loop system as the exact case (section 4a).  

\begin{proposition}
     The norm of the error $||e(x,\;z)||_2= ||e(x,\;y)||_2$ of the approximate sensitivity conditioning term defined in \eqref{eq:errorlinear} and \eqref{eq:errorxy} respectively, can be upper-bounded as follows:
     \begin{align}
     \begin{array}{ll}
         ||e(x,\;z)||_2 \s\leq ||(I-BB^\dagger_L)A_{22}^{-1}A_{21}||_2\; ||\dot{x}(t)||_2\\
         \s= ||(I-BB^\dagger_L)A_{22}^{-1}A_{21}||_2\; ||A_{11}x+A_{12}z||_2
    \end{array}
\end{align}
\end{proposition}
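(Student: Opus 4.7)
The plan is to obtain the stated bound directly from the factorization already provided in \eqref{eq:errorlinear}, where $e(x,z)$ is written as a constant matrix times the vector $A_{11}x+A_{12}z$. Reading the right hand side of \eqref{eq:errorlinear} as $Mw$ with $M\teq (I-BB^\dagger_L)A_{22}^{-1}A_{21}$ and $w\teq A_{11}x+A_{12}z$, the submultiplicative property of the induced matrix $2$-norm gives $||e(x,z)||_2 = ||Mw||_2 \leq ||M||_2\,||w||_2 = ||M||_2\,||A_{11}x+A_{12}z||_2$ in one step. Using the first row of the closed loop dynamics \eqref{eq:closed_ps} to identify $A_{11}x+A_{12}z = \dot{x}(t)$ then yields the equivalent form $||M||_2\,||\dot{x}(t)||_2$, which is exactly the bound stated in the proposition.

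For the equality $||e(x,z)||_2 = ||e(x,y)||_2$, I would verify that $e(x,y)$ in \eqref{eq:errorxy} is simply $e(x,z)$ re-expressed in boundary layer coordinates. Since $y = z - h(x) = z + A_{22}^{-1}A_{21}x$, substituting this into \eqref{eq:errorxy} and collecting terms shows that the summands proportional to $A_{12}A_{22}^{-1}A_{21}x$ cancel, leaving exactly $(I-BB^\dagger_L)A_{22}^{-1}A_{21}(A_{11}x+A_{12}z)$, which matches \eqref{eq:errorlinear}. Hence $e(x,y)$ and $e(x,z)$ are the same vector in $\mb{R}^{n_z}$ and a fortiori have equal $2$-norm; this equality is really a coordinate-invariance statement rather than a calculation.

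There is no real obstacle here: the content reduces to a one line consistency check between \eqref{eq:errorlinear} and \eqref{eq:errorxy}, plus one application of the standard inequality $||Mw||_2 \leq ||M||_2\,||w||_2$. The only thing I would flag explicitly in the writeup is that the bound is conservative: it is tight only when $A_{11}x+A_{12}z$ happens to align with a leading right singular vector of $M$, so in transient regimes where $\dot{x}(t)$ lies close to the null space of $(I-BB^\dagger_L)A_{22}^{-1}A_{21}$ the actual error will be substantially smaller than the bound indicates. This observation also connects the proposition back to the discussion surrounding \eqref{eq:approxxy}, since it confirms that the approximate sensitivity conditioning acts like the exact version whenever the residual range component $(I-BB^\dagger_L)$ is small on the relevant directions.
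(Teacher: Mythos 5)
Your proposal is correct and follows essentially the same route as the paper's proof: apply the sub-multiplicative bound $\|Mw\|_2\leq\|M\|_2\|w\|_2$ to the factorization in \eqref{eq:errorlinear} and observe that the factor $A_{11}x+A_{12}z$ equals $\dot{x}(t)$ in either coordinate system, which is also how the equality $\|e(x,z)\|_2=\|e(x,y)\|_2$ is justified. Your explicit verification that \eqref{eq:errorxy} reduces to \eqref{eq:errorlinear} under $y=z+A_{22}^{-1}A_{21}x$, and the remark on when the bound is tight, are consistent elaborations of the same argument.
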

\begin{proof}
    This result is obtained by the  sub-multiplicative property of matrix norm, i.e. $||AB||_2\leq ||A||_2||B||_2$, and realizing the right hand side of \eqref{eq:errorlinear} is equal to $\dot{x}$ in either coordinates.
\end{proof}

In particular, the value of
\begin{align}\label{eq:errorestimate}
    ||(I-BB^\dagger_L)A_{22}^{-1}A_{21}||_2
\end{align} 
provides a good estimate of the approximation error, as it is not dependent on time.

\section{Simulation Results}
In this section, the approximate sensitivity conditioning control strategy is implemented for a voltage regulated buck converter and a PMSM with an active rectifier. The baseline controller is based on a cascaded inner/outer loop PI control.  For the buck converter case, the baseline controller is modified to investigate different time scale separations between the slow and fast dynamics, verifying proposition \ref{prop:epssens}.

\begin{figure}[!b]
	\begin{center}
		\centering
		\includegraphics[width=0.47\textwidth]{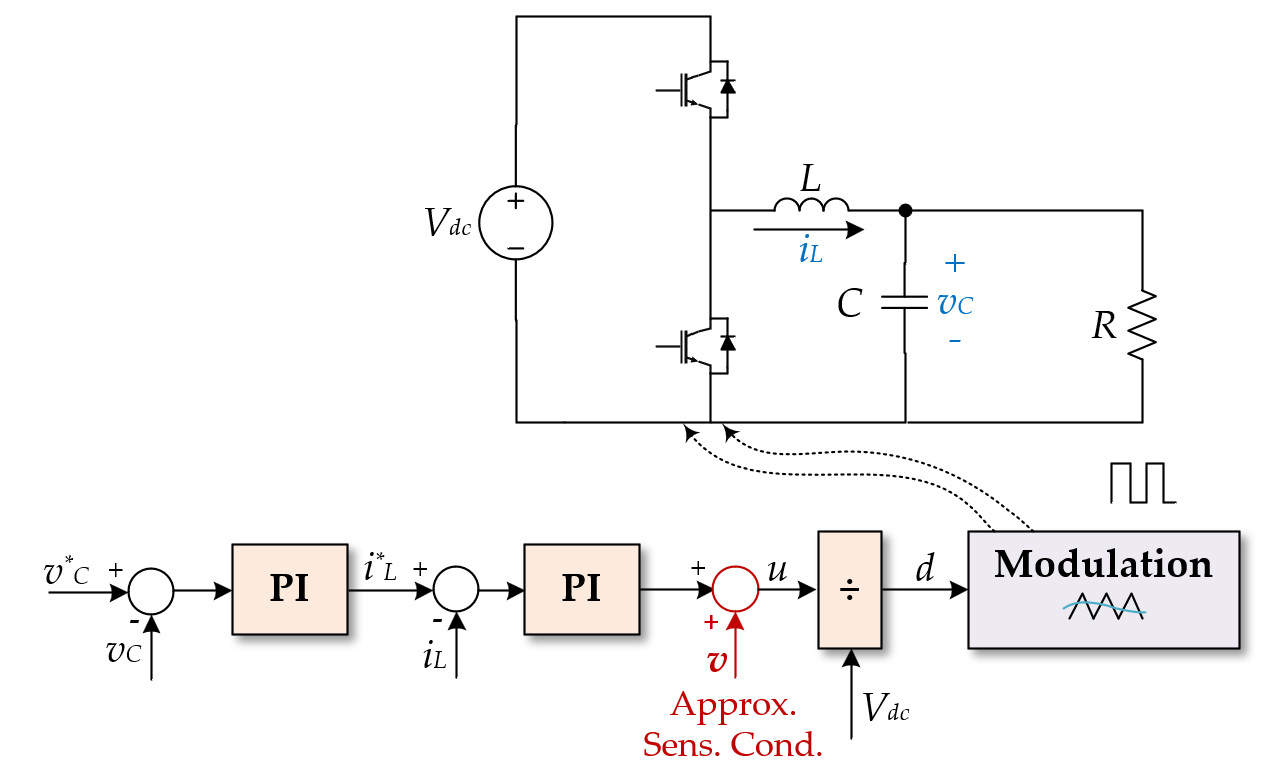}
		\caption{Buck converter with approximate sensitivity conditioning, the term in red is computed using \eqref{eq:buckasc}.}\label{fig:cascade}
	\end{center}
\end{figure}

\begin{table}[!b]
	\renewcommand{\arraystretch}{1.5}
	\caption{DC-DC buck converter parameters used for simulation and experimental results.}
	\label{tab:case1} 
	\centering
	\begin{tabular}{c|c|c}
		\hline \hline
		{\textbf{No}} \s  {\textbf{Parameters}} \s Value\\\hline
		$1$\s Switching frequency \s 40 $\times 10^3$Hz\\
		$2$\s Load resistance (R)\s 18.6 $\Omega$\\
		$3$\s Load capacitance (C)\s 510 $\mu$F\\
		$4$\s Inductance (L)\s 1 mH\\
		$5$\s Outer loop PI gains $(K_{P,v_C},\;K_{I,v_C})$\s 1, 30\\
		$6$\s Inner loop PI gains $(K_{P,i_L},\;K_{I,i_L})$\s 1, 700\\
		\hline \hline
	\end{tabular}
\end{table}
\subsection{Case 1: DC-DC Buck converter}
The open loop state space model of the buck converter shown in Fig. \ref{fig:cascade} is given by 
\begin{align}
    \begin{pmatrix}\dot{v}_C\\\dot{i}_L\end{pmatrix}=\begin{pmatrix}-\frac{1}{RC}&\frac{1}{C}\\-\frac{1}{L}&0\end{pmatrix}\begin{pmatrix}{v}_C\\{i}_L\end{pmatrix}+\begin{pmatrix}0\\\frac{1}{L}\end{pmatrix}{u},
\end{align}
where the input $u\teq d V_\dc$ where $d$ is the duty cycle and $V_\dc$ is the input voltage. Considering a cascaded inner/outer loop PI control, we define two additional states for the integral terms as follows:
\begin{align}
    \dot{\zeta}_{v_C}\s=v_C^r-v_C, \\
    \dot{\zeta}_{i_L}\s=i_L^r-i_L.
\end{align}
The states are now separated into slow $x\in\mb{R}^2$ and fast $z\in\mb{R}^2$ modes:
\begin{align}\label{eq:buck_states}
    x=\begin{pmatrix}v_C\\\zeta_{v_C}\end{pmatrix},~~~~~~~
z=\begin{pmatrix}i_L\\\zeta_{i_L}\end{pmatrix}.
\end{align}
The open loop dynamics of the slow state $x$ can then be written as:
\begin{align}\label{eq:slow_diff}
    \begin{pmatrix}\dot{v}_C\\\dot{\zeta}_{v_C}\end{pmatrix}=\underbrace{\begin{pmatrix}
-\frac{1}{RC}&0\\-1&0
\end{pmatrix}}_{A_{11}}\begin{pmatrix}v_C\\\zeta_{v_C}\end{pmatrix}+&\underbrace{\begin{pmatrix}\frac{1}{C}&0\\0&0\end{pmatrix}}_{A_{12}}\begin{pmatrix}i_L\\\zeta_{i_L}\end{pmatrix}\notag\\+&\begin{pmatrix}0\\1\end{pmatrix}\begin{pmatrix}v_{C}^r\end{pmatrix}.
\end{align}
Similarly, the open loop differential equation for the faster state is given by
\begin{align}\label{eq:buckfastopen}
\begin{pmatrix}\dot{i}_L\\\dot{\zeta}_{i_L}\end{pmatrix}=&\underbrace{\begin{pmatrix}
-\frac{1}{L}&0\\0&0
\end{pmatrix}}_{\tilde{A}_{21}}\begin{pmatrix}v_C\\\zeta_{v_C}\end{pmatrix}+\underbrace{\begin{pmatrix}0&0\\-1&0\end{pmatrix}}_{\tilde{A}_{22}}\begin{pmatrix}i_L\\\zeta_{i_L}\end{pmatrix}\notag\\&+\underbrace{\begin{pmatrix}\frac{1}{L}\\0\end{pmatrix}}_{B}u +\begin{pmatrix}0\\1\end{pmatrix}\begin{pmatrix}i_L^r\end{pmatrix}.
\end{align}
Based on the cascaded PI control shown in Fig. \ref{fig:cascade}, the input $u$ is derived as follows:
\begin{align}
 u=K_{P,i_L}\left(i_L^r-i_L\right)+K_{I,i_L}\left(\zeta_{i_L}\right)+v,
\end{align}
where $v$ will be used for the approximate sensitivity conditioning implementation \eqref{eq:apprxsc}, since the number of inputs (one) is less than the dimension of the fast states (two). The current reference term $i_L^r$ is the output of the outer PI controller of the form: 
\begin{align}
    i_L^r=K_{P,v_{C}}\left(v_{C}^r-v_{C}\right)+K_{I,v_{C}}\left(\zeta_{v_{C}}\right).
\end{align}
In order to compute the sensitivity conditioning term, the closed loop formulation  is obtained by substituting for $u$ and $i_L^r$ as follows:\blue{
\begin{align}\label{eq:buckclosed}
\begin{pmatrix}\dot{i}_L \\ \dot{\zeta}_{i_L}\end{pmatrix} =& \underbrace{\begin{pmatrix} \frac{-1-K_{P,i_L}K_{P,v_C}}{L} &\frac{1}{L}K_{P,i_L}K_{I,v_C} \\ -K_{P,v_C}&K_{I,v_C}\end{pmatrix}}_{A_{21}}\begin{pmatrix}v_C\\\zeta_{v_C}\end{pmatrix} \notag\\
+ &\underbrace{\begin{pmatrix}-\frac{1}{L}K_{P,i_L} & \frac{1}{L}K_{I,i_L} \\ -1 & 0\end{pmatrix}}_{A_{22}} \begin{pmatrix} i_L \\ \zeta_{i_L} \end{pmatrix}\notag\\+&\begin{pmatrix}\frac{K_{P,i_L}K_{P,v_{C}}}{L}\\K_{P,v_{C}}\end{pmatrix}\begin{pmatrix}v_{C}^r\end{pmatrix} + \underbrace{\pmt{\frac{1}{L} \\ 0}}_{B}v.
\end{align}}

\begin{figure}[!b]\newcommand{\meas}{.5}
	\centering
    \includegraphics[width=\meas\textwidth]{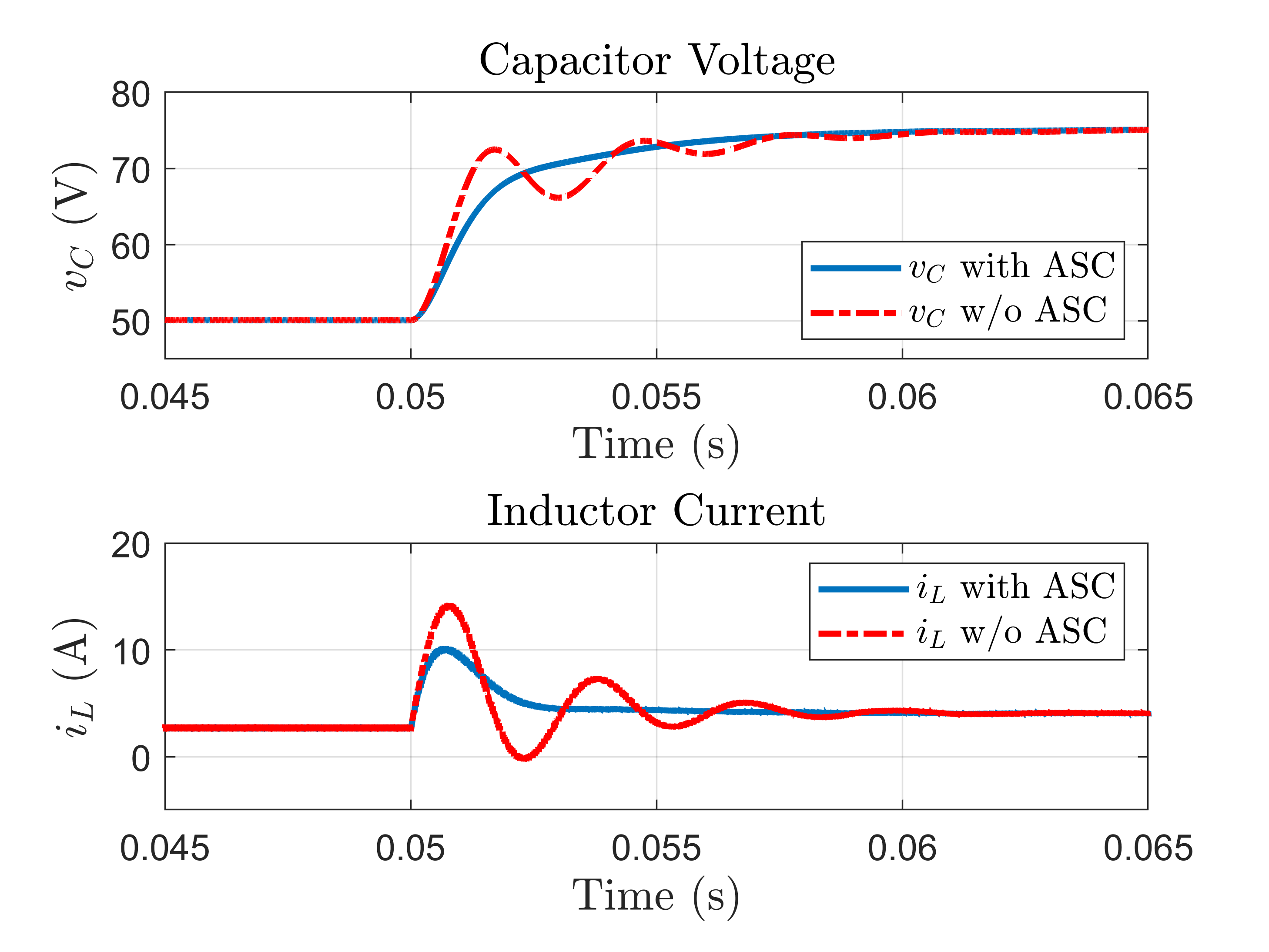}
    \caption{Case 1: Capacitor voltage and inductor current of the buck converter with Approximate Sensitivity Conditioning (ASC) (blue trace) and without (red trace).}\label{fig:sim_case1_zoom}
\end{figure}
The approximate sensitivity conditioning term can then be computed as follows:
\begin{align}
    v \s= B^\dagger_LS^z_x(x,\;z)f(x,\;z)\\
    \Ra v \s= -B^\dagger_LA_{22}^{-1}A_{21}\left[A_{11}x+A_{12}z\right]\label{eq:buckasc}
\end{align}
where $A_{21},\;A_{22}$ are shown in \eqref{eq:buckclosed} and $B$ is defined in \eqref{eq:buckfastopen}.
\begin{figure*}[!h]\newcommand{\meas}{.25}
	\centering
	\centering
	\subfloat[Test 1: Small time scale separation (large $\eps$).]{\label{fig:test1}\includegraphics[height=\meas\textwidth]{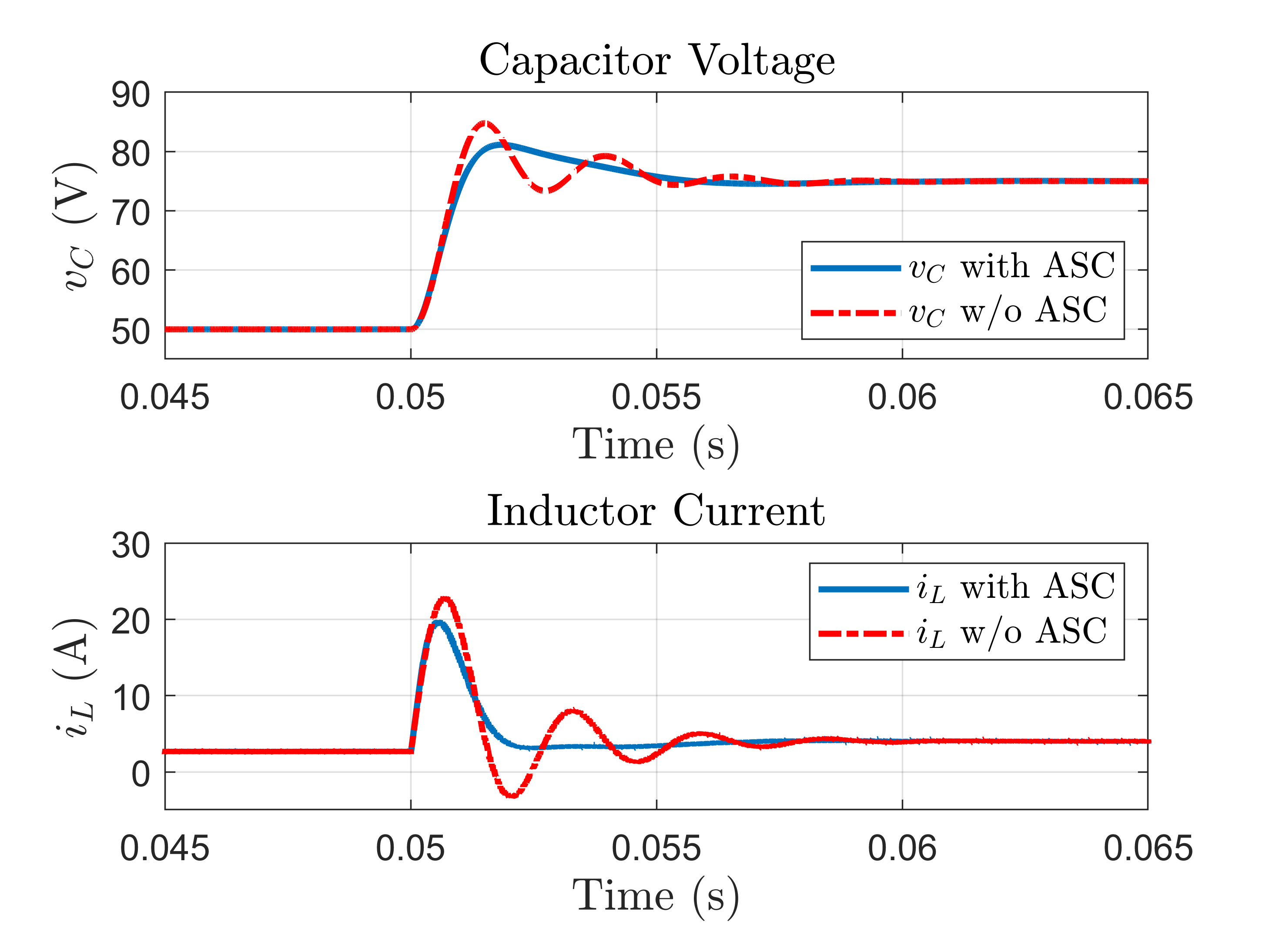}}
	\subfloat[Test 2: Medium time scale separation.]{\label{fig:test2}\includegraphics[height=\meas\textwidth]{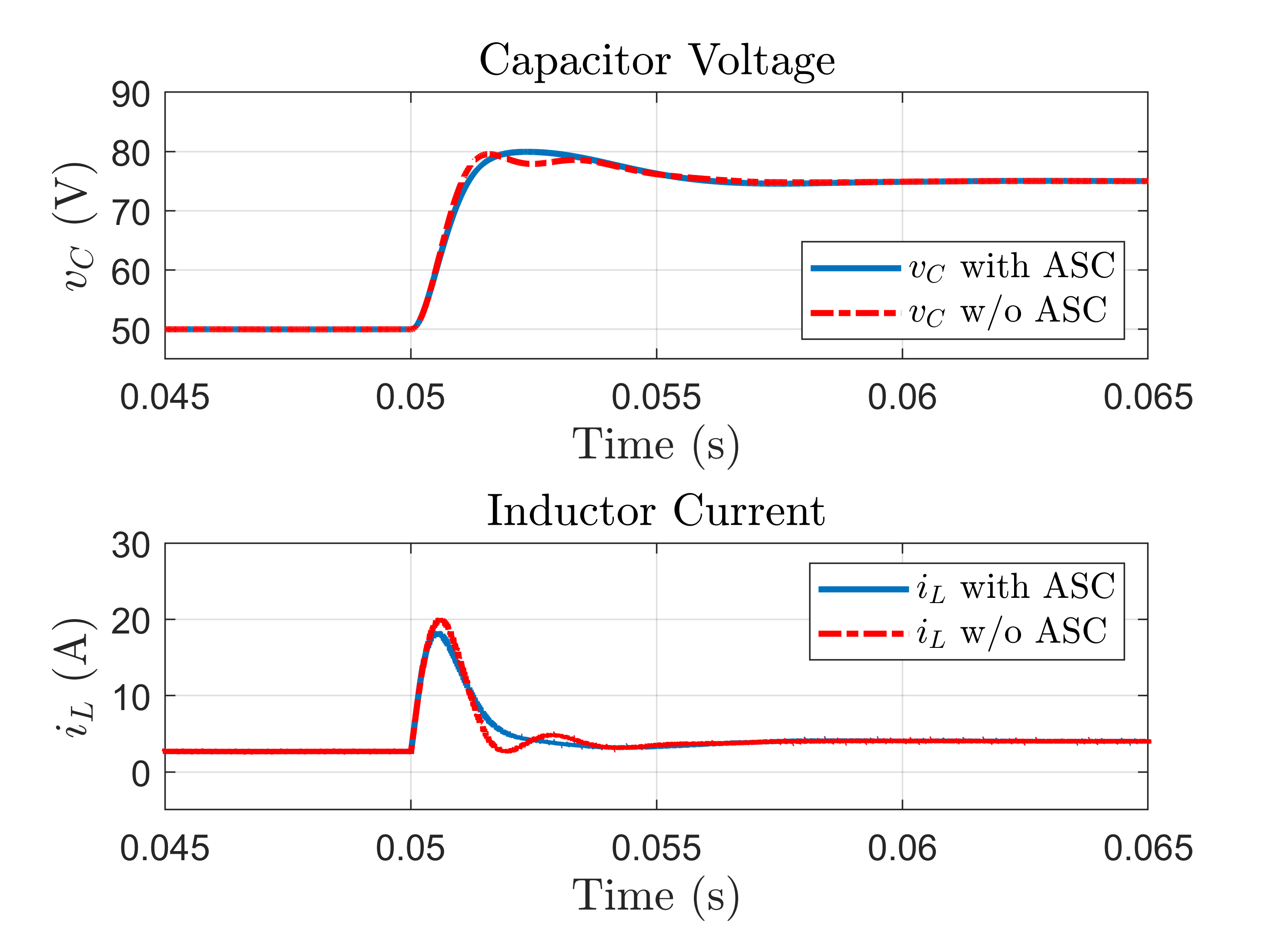}}
	\subfloat[Test 3: Large time scale separation (small $\eps$).]{\label{fig:test3}\includegraphics[height=\meas\textwidth]{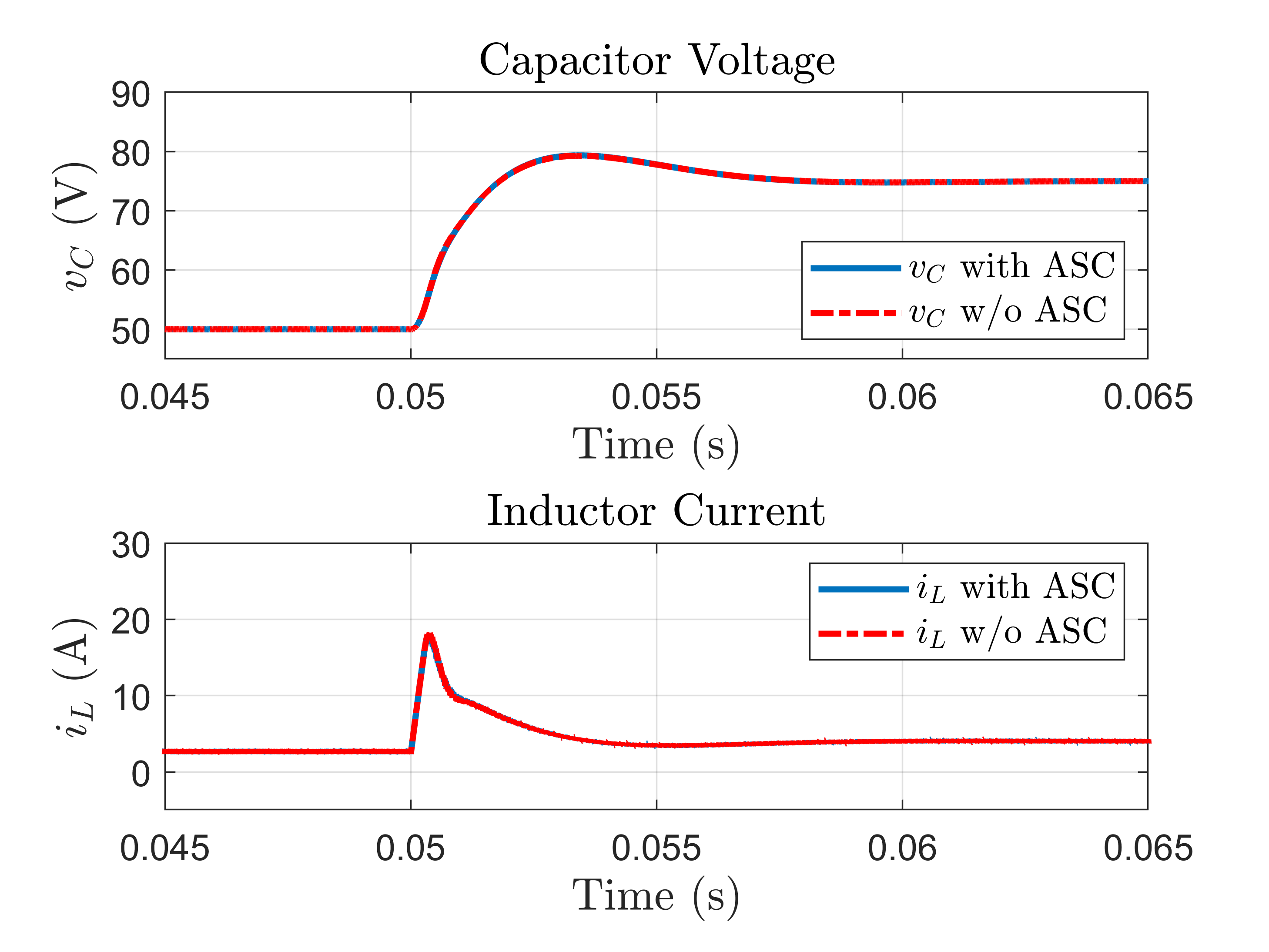}}
	\caption{{Case 2: Simulation results for singular perturbation analysis of buck converter.}}\label{fig:case2}
\end{figure*}
The simulation parameters  are presented in Table \ref{tab:case1}. The load voltage and the inductor current plots are shown in Fig. \ref{fig:sim_case1_zoom}. At time $t = 0.05$ s, the output voltage reference is changed from 50 V to 75 V. During this change in reference value, the  simulation with approximate sensitivity control  provides a better transition to the new reference value in comparison to the regular dual-PI cascade controller.  In this case, the term in \eqref{eq:errorestimate} is $0.0\bar{3}$ which implies that the approximation error is small.

\begin{table}[!t]
	\renewcommand{\arraystretch}{1.25}
	\caption{DC-DC buck converter: PI gains and eigenvalues for case 2.}
	\label{tab:case2} 
	\begin{tabular}{c|c|c|c|c}
		\hline \hline
         \textbf{Test} \s PI (outer) \s PI (inner) \s $\lambda$ w/o sc \s $\lambda$ w/ sc\\\hline \hline
          1 \s (0.94, 970) \s (2, 2000) \s $-474\pm2433j$ \s $-1512\pm2019j$\\
                \s    \s    \s $-579\pm532j$ \s $-463\pm618j$ \\
    \hline
         2 \s (0.7, 574) \s (3, 4500) \s $-1000\pm2670j$ \s $-1755\pm2213j$\\
                \s    \s    \s $-544\pm570j$ \s $-495\pm624j$ \\
    \hline
     3 \s (.45, 255) \s (10, 5e4) \s $-4572\pm5639j$ \s $-5021\pm5211j$\\
                \s    \s    \s $-481\pm493j$ \s $-480\pm498j$ \\
		\hline \hline
	\end{tabular}
\end{table}

\subsection{Case 2: Singular Perturbation Analysis for Buck Converter}
The buck converter presented in case 1 is now implemented with variable PI gains to test the impact of approximate sensitivity control strategy at different time scale separation between the slow ($x$) and fast states ($z$), described in section 3b and proposition \ref{prop:epssens}. The singular perturbation parameter, $\epsilon$, is implicitly associated by eigenvalues of the system \cite{kokotovic}. For example, a smaller $\epsilon$ corresponds to the system having the eigenvalues associated with the fast states ($\lambda_z$) much smaller than the eigenvalues associated with the slow states ($\lambda_x$), i.e.  $\lambda_z \ll \lambda_x$. The PI gains at the inner and outer loop are both used to achieve a different degree of time scale separation. The parameters for this case are shown in Table \ref{tab:case2}, and the simulation results are presented in Fig. \ref{fig:case2}.

In Fig. \ref{fig:test1}, the time scale separation between the slow and fast states is not very large. As seen from Table \ref{tab:case2} test 1,  there is not a clear separation of eigenvalues of the system between slow and fast states without sensitivity conditioning. Therefore, the approximate   term has a higher impact on the dynamics and the new eigenvalues with this term are modified significantly. In  Fig. \ref{fig:test2}, there is a slightly better separation of eigenvalues before the   term is applied (test 2 in Table \ref{tab:case2}).  Therefore, it can be seen that the approximate sensitivity conditioning term has a slightly less impact than the first test.  Lastly, we consider a more drastic separation between the slow and fast states of the system before applying approximate sensitivity conditioning as shown in Fig. \ref{fig:test3} corresponding to test 3 in Table \ref{tab:case2}.  The eigenvalues before and after the   term are relatively similar and the transient response is nearly identical. As mentioned by proposition \ref{prop:epssens}, as the time scale separation between slow and fast eigenvalues increases, the impact of approximate sensitivity   term is reduced. \blue{Furthermore, it can be seen that the (approximate) sensitivity conditioning helps to achieve a similar performance as test 3, even in test 1, without the need of large PI gains, which can saturate the duty cycle.}



\subsection{Case 3: PMSM with Active Rectifier}
In this case, the approximate sensitivity strategy is implemented on a  PMSM with an active rectifier, as shown in Fig. \ref{fig:PMSMoverall} \cite{herrera2021nonlinear, 2016GaoPMSM}.  The system in this case is described by nonlinear differential equations.  However, the equations can be linearized with respect to the input and the analysis presented in section 4b is used, in particular \eqref{eq:apprxsc} to obtain $v$.

The differential equation representing the slower states $x = \br{{v}_{dc},\;{\zeta}_{v_{dc}}}^T$ are:
\begin{align}\renewcommand{\arraystretch}{1.15}\label{eq:PMSMslow}
\begin{array}{rl}
\dot{v}_{dc}&= -\cfrac{1}{RC}v_{dc}+\cfrac{3}{4C}\left(d_d i_d+d_q i_q \right)-\cfrac{1}{C}i_L\\
\dot{\zeta}_{v_{dc}}&= v_{dc}^r-v_{dc},
\end{array}
\end{align}
where $v_{dc}^r$ is the reference voltage and ${\zeta}_{v_{dc}}$ is associated with the outer loop integrator. 

The faster states are defined as $z=\br{i_d,\;\zeta_{id},\;i_q,\;\zeta_{iq}}^T$, with the following nonlinear dynamics:
\begin{align}\renewcommand{\arraystretch}{1.15}\label{eq:PMSMfast}
\begin{array}{rl}
\dot{i}_{d}&=-\cfrac{R_s}{L_d}i_d+\omega_r\cfrac{L_q}{L_d}i_q+\cfrac{1}{2L_d}d_dv_{dc}\\
\dot{\zeta}_{i_d}&= i_d^r-i_d\\
\dot{i}_{q}&= -\cfrac{R_s}{L_q}i_q-\omega_r\cfrac{L_d}{L_q}i_d-\cfrac{\omega_r}{L_q}\lambda_m+\cfrac{1}{2L_q}d_qv_{dc}\\
\dot{\zeta}_{i_q}&= i_q^r-i_q,
\end{array}
\end{align}
where $i_d^r$ and $i_q^r$ are the $d$ and $q$ axis current references respectively (obtained from the outer loop),  $\dot{\zeta}_{i_d}$ and $\dot{\zeta}_{i_q}$ are the $d$ and $q$ axis current error associated with the inner loop PIs, and $u = \br{d_d,\;d_q}^T$.

\begin{table}[!b]
	\renewcommand{\arraystretch}{1.5}\vspace{0pt}
	\caption{PMSM with active rectification: Simulation parameters used in case 3.}
	\label{tab:case3} 
	\centering
	\begin{tabular}{c|c}
		\hline \hline
		  {\textbf{Parameters}} \s Value\\\hline
		Time-step ($T_s$)\s 1 $\mu$s\\
		 $n$ \s 8000 rpm\\
		P, I gains for $v_{dc}$\s 2, 1000\\
		P, I gains for $(i_d)$ and $(i_q)$ \s (0.5, 2) and (0.5, 2)\\
		Poles \s 12 \\
		$L_d$ \s 0.09 mH\\
		 $L_q$  \s  0.255 mH\\
		$r_s$ \s 5.3 m$\Omega$\\
		 $\lambda_m$ \s 0.0385\\
		$\omega_m$ \s 2$\pi\cfrac{n}{60}$\\
		 $\omega_r$ \s $\cfrac{\text{Poles}}{2}\:\omega_m$\\
		Load damp Resistance (R)\s 5e1 $\Omega$ \\	
		Load Capacitance (C) \s 1 mF \\	
		\hline \hline
	\end{tabular}
\end{table}

\blue{The inputs $d_d$ and $d_q$ are obtained from standard inner current loop PIs with decoupling \cite{2016GaoPMSM, herrera2021nonlinear}:
\begin{align}\renewcommand{\arraystretch}{1.5}\label{eq:dddq}
\begin{array}{lll}
d_d&=\cfrac{2}{v_{dc}}\left[K_{P,i_d}\left(i_d^r-i_d\right)+K_{I,i_d}\left(\zeta_{i_d}\right)-\omega_rL_qi_q+v_d\right]\\
d_q&=\cfrac{2}{v_{dc}}\big[K_{P,i_q}\left(i_q^r-i_q\right)+K_{I,i_q}\left(\zeta_{i_q}\right)\\&\q\q\q+\omega_rL_di_d+\omega_r\lambda_m+v_q\big],
\end{array}
\end{align}
where $v = \br{v_d,\;v_q}^T$ will be used for the approximate sensitivity conditioning   term \eqref{eq:apprxsc}.
\begin{figure}[!t]\newcommand{\meas}{.45}
	\centering
	\subfloat[PMSM with an active rectifier circuit and a constant current load.]{\label{fig:pmsm_circuit}\includegraphics[width=\meas\textwidth]{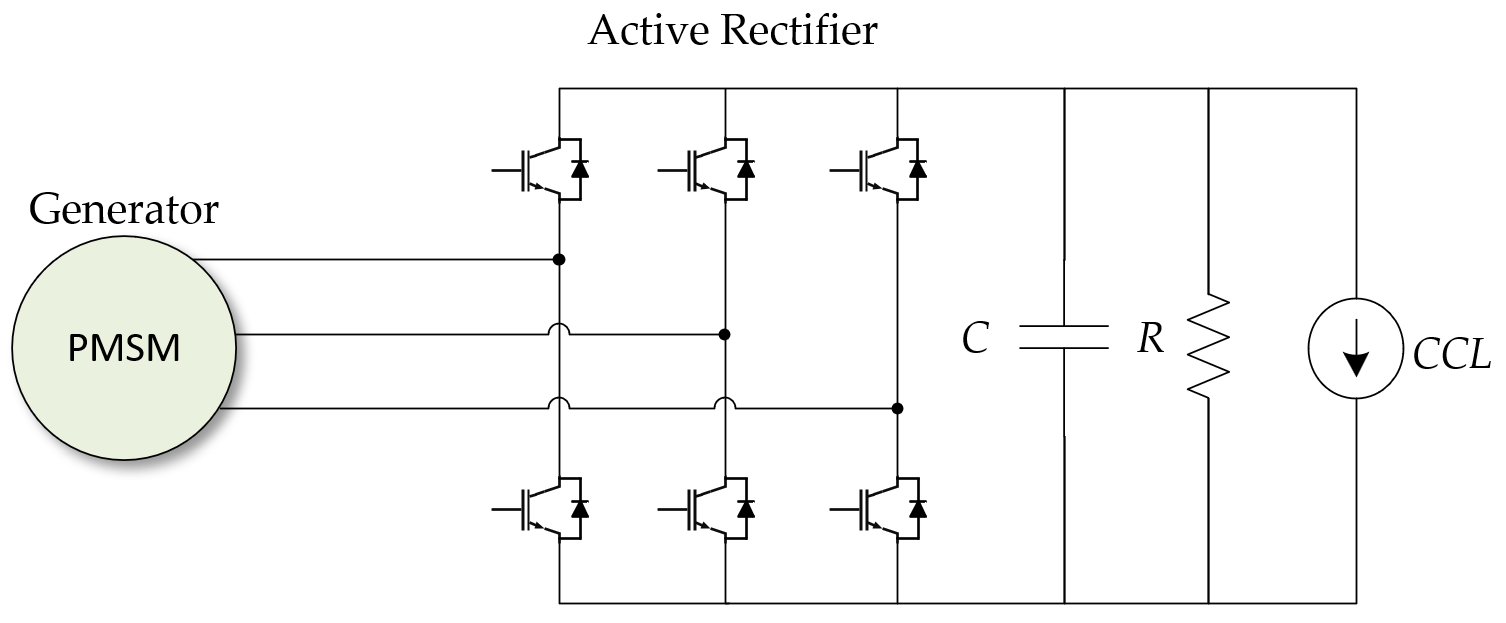}}\\
	\subfloat[Controller overview of PMSM.  The   terms are shown in red.]{\label{fig:pmsm_controllers}\includegraphics[width=\meas\textwidth]{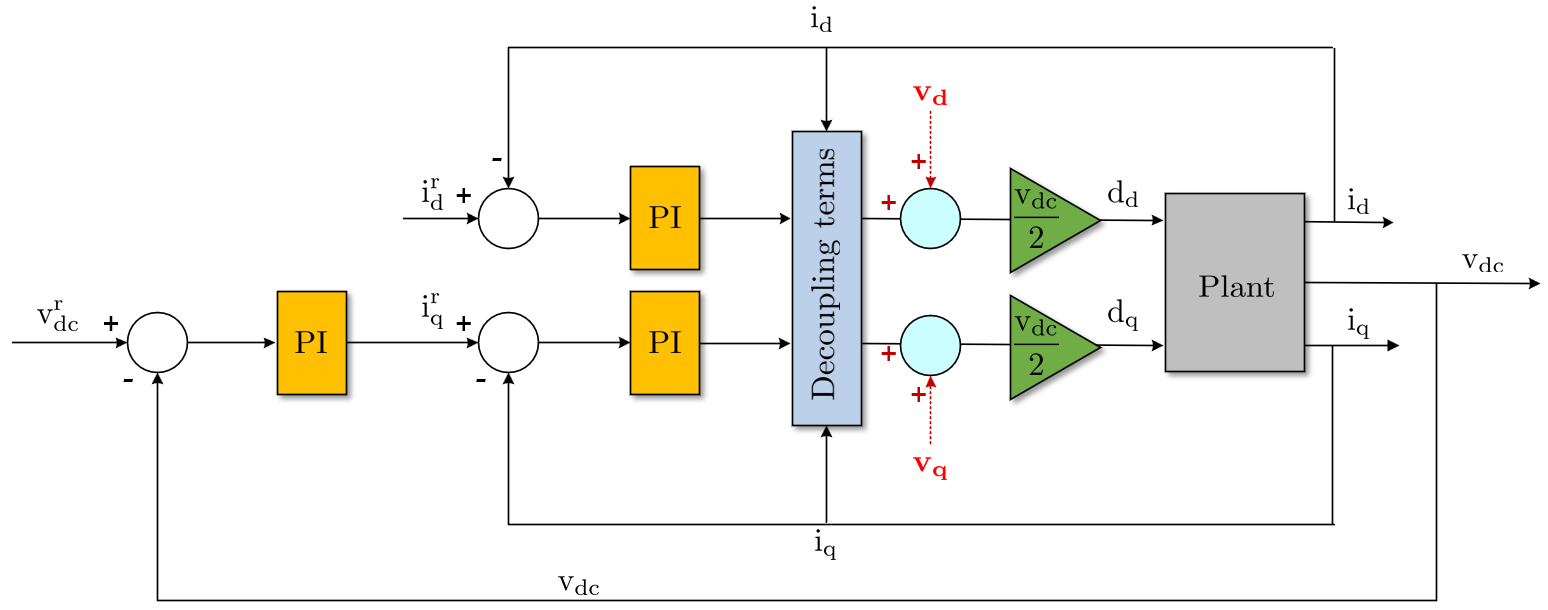}}
	\caption{Approximate sensitivity conditioning strategy for PMSM with active rectification.}\label{fig:PMSMoverall}
\end{figure}
The references for the faster states, $i_d^r$ and $i_q^r$, are obtained from the outer loop PI control as follows:
\begin{align}\renewcommand{\arraystretch}{1.75}\label{eq:iqr}
\begin{array}{rl}
i_d^r&=0\\
i_q^{r}&=-\left[K_{P,v_{dc}}\left(v_{dc}^r-v_{dc}\right)+K_{I,v_{dc}}\left(\zeta_{v_{dc}}\right)\right].
\end{array}
\end{align}
where $i_d^r=0$ corresponds to the case without flux weakening control \cite{2016GaoPMSM}, and $i_q^r$ is obtained from the outer loop PI to track dc bus voltage.}  \blue{Once the inputs \eqref{eq:dddq} and the current references \eqref{eq:iqr} are substituted in the open loop system dynamics \eqref{eq:PMSMslow} (slow states) and \eqref{eq:PMSMfast} (fast states), the closed loop system becomes:
\begin{align}
    \dot{x} \s= f(x,\;z) \\
    \dot{z} \s = g(x,\;z) + Bv \label{eq:pmsmclosed_z}
\end{align}
where $x= \br{{v}_{dc},\;{\zeta}_{v_{dc}}}^T\in\mb{R}^2,\;z=\br{i_d,\;\zeta_{id},\;i_q,\;\zeta_{iq}}^T\in\mb{R}^4,\;v = \br{v_d,\;v_q}^T\in\mb{R}^2$, and the matrix $B\in\mb{R}^{4\times 2}$ is:
\begin{align}\renewcommand{\arraystretch}{1.15}\label{eq:Bpmsm}
    B = \begin{pmatrix}\cfrac{v_{dc}}{2L_d}&0\\0 \s 0\\ 0&\cfrac{v_{dc}}{2L_q}\\ 0\s 0\end{pmatrix}.
\end{align}}

\begin{figure}[!b]
	\begin{center}
		\centering
		\includegraphics[width=0.5\textwidth]{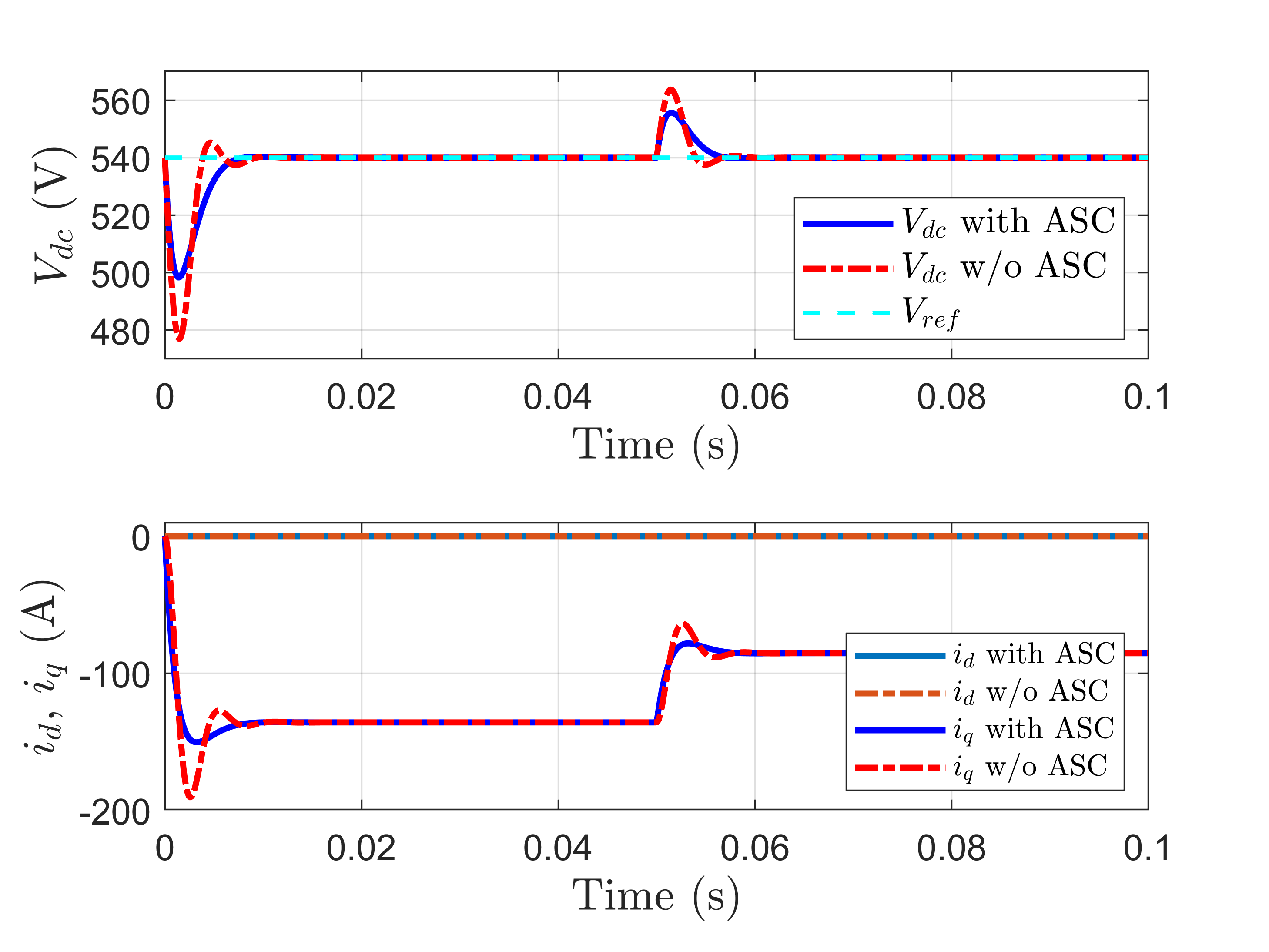}
		\caption{\label{fig:pmsm_plot} Simulation results for PMSM with active rectification. Top: Load voltage $(v_{dc})$, bottom: Currents in d-q frame $(i_{d})$ and $(i_{q})$.}
	\end{center}
\end{figure}
\blue{The goal is then to use the new   terms, $v = \br{v_d,\;v_q}^T$, for the sensitivity conditioning implementation:
\begin{align}
    Bv(t)&=S_x^z\big(x(t),\;z(t)\big)f\big(x(t),\;z(t)\big) 
\end{align}
}
\blue{
However, since the number of inputs is less than the dimension of the fast states, the matrix $B$ in \eqref{eq:Bpmsm} is not invertible. Therefore, approximate sensitivity conditioning is implemented based on \eqref{eq:hderiv}, \eqref{eq:pred_model}, and \eqref{eq:apprxsc}, as follows:
\begin{align}
    v \s= B^\dagger_LS^z_x(x,\;z)f(x,\;z) \\
    \Ra  v\s= -B^\dagger_L\pa{\nabla_z g(x,\;z)}^{-1}\nabla_x g(x,\;z)f(x,\;z)
\end{align}
where the  gradient terms  are obtained analytically from $g(x,\;z)$ in \eqref{eq:pmsmclosed_z}. }

The simulation parameters for this case are shown in Table \ref{tab:case3}, and Fig. \ref{fig:pmsm_controllers} summarizes the closed loop control with approximate sensitivity conditioning. The PMSM model is initialized with the load at the dc side drawing 33.48 kW at 540 V. At time $t = 0.05$ $s$, the power consumption is reduced to 18.9 kW. The simulation results are shown in Fig. \ref{fig:pmsm_plot}. As seen from this figure, the sensitivity conditioning strategy provides a better response in comparison to the traditional dual-PI cascade controller.

\begin{figure}[!b]\newcommand{\meas}{.27}
	\centering
	\subfloat[Hardware test-bed demonstrating the control of buck converter.]{\label{fig:testbed}\includegraphics[height=\meas\textwidth]{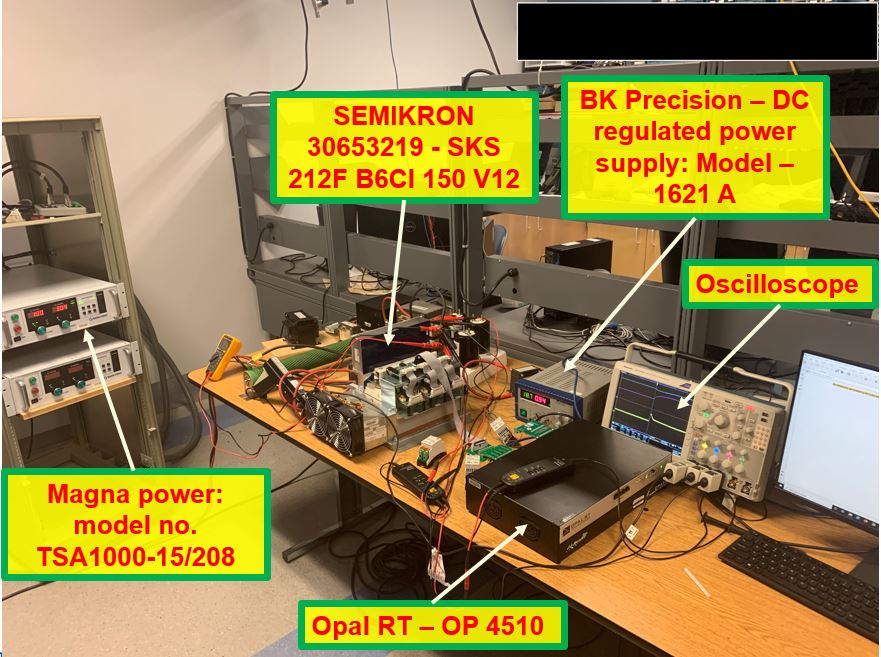}}\\
	\subfloat[High level schematic of buck converter interconnections.]{\label{fig:schematic}\includegraphics[height=\meas\textwidth]{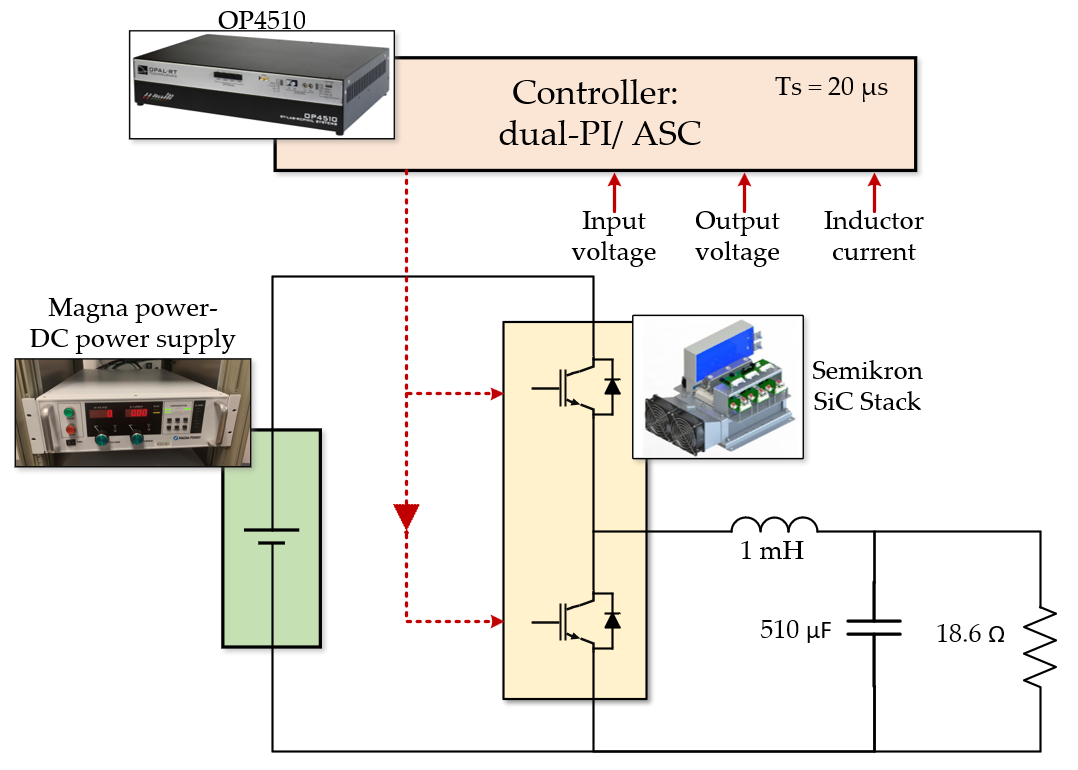}}\\
\caption{Experimental results: Hardware test-bed and interconnections.}\label{fig:hardware}
\end{figure}
\section{Experimental Results}
In this section, experimental results are shown for a closed loop buck converter, similar to case 1 in the previous section. The parameters of the test-bed and PI gains are the same as those in Table \ref{tab:case1}. An input voltage of 100 V  is provided by Magna Power's DC power supply \cite{magna}.  The switching devices are based on Semikron Silicon Carbide half-bridge module \cite{semikron}. The controller is developed using Opal RT OP4510 \cite{OpalRT}, containing the dual-PI cascade controller and the approximate sensitivity conditioning term. The hardware test-bed is shown in Fig. \ref{fig:testbed}, with a high-level diagram in Fig. \ref{fig:schematic}. The controller is implemented in OP4510 with a time step of $T_{sc} =20\;\mu\txt{s}$. 

\begin{figure}[!t]\newcommand{\meas}{.44}
	\centering
	\subfloat[Results with regular dual-PI cascade controller.]{\label{fig:exp_nps}\includegraphics[width=\meas\textwidth]{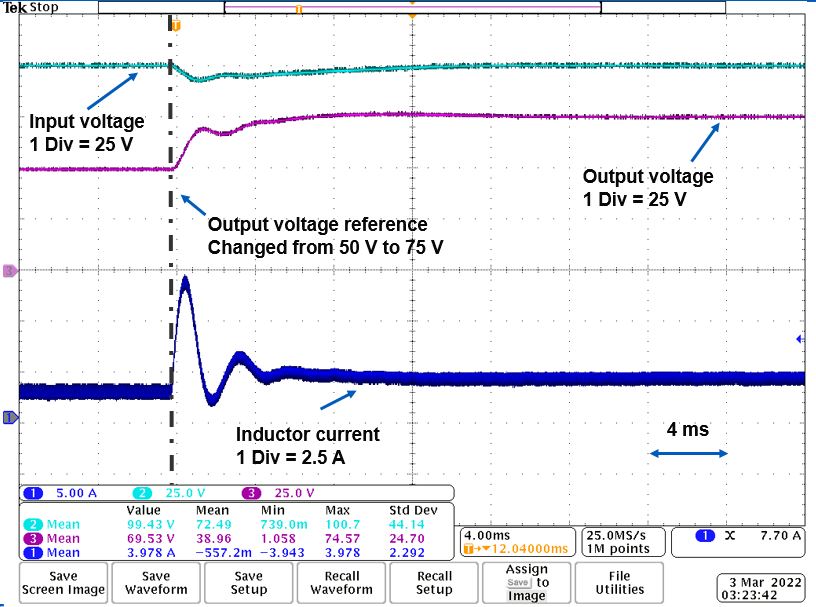}}\\
	\subfloat[Results with approximate sensitivity control.]{\label{fig:exp_ps}\includegraphics[width=\meas\textwidth]{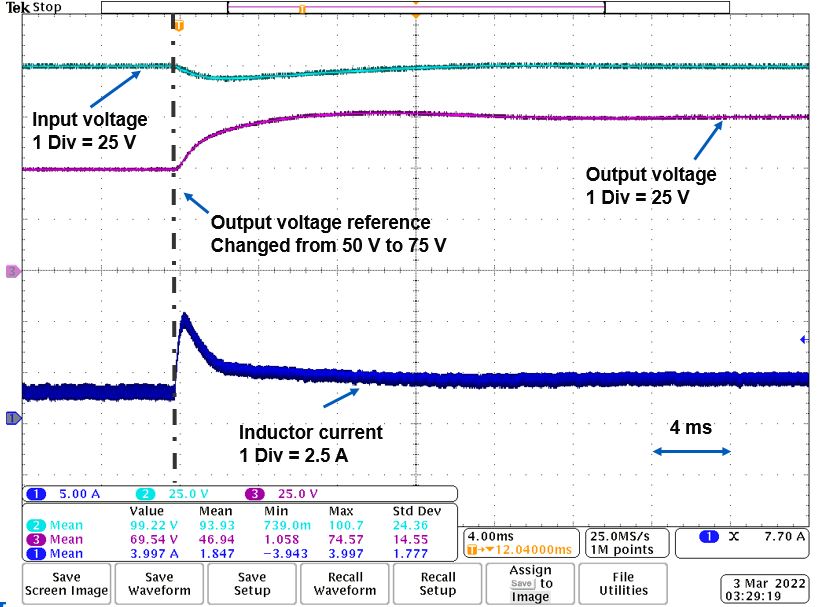}}\\
\caption{Experimental results without (top) and with (bottom) approximate sensitivity conditioning.}\label{fig:exp}
\end{figure}

The experimental results are shown in Fig. \ref{fig:exp}.  The inductor current is shown by the dark blue trace, the input voltage in light blue, and output voltage in purple.  The results with the typical cascaded control without sensitivity conditioning are shown in Fig. \ref{fig:exp_nps}, while the results with approximate sensitivity conditioning are shown in Fig. \ref{fig:exp_ps}.  As it can be seen, the approximate sensitivity conditioning helps in improving the transient response and the overall closed loop system. A comparison of these two results is shown in Fig. \ref{fig:exp_analysis}. It is emphasized that controller gains are the same for both cases, with the main difference being the   approximate sensitivity conditioning term, obtained using \eqref{eq:apprxsc}. It can be seen that the system with   approximate sensitivity conditioning exhibits a better transient response.

\begin{figure}[!t]
	\begin{center}
		\centering
		\includegraphics[width=0.5\textwidth]{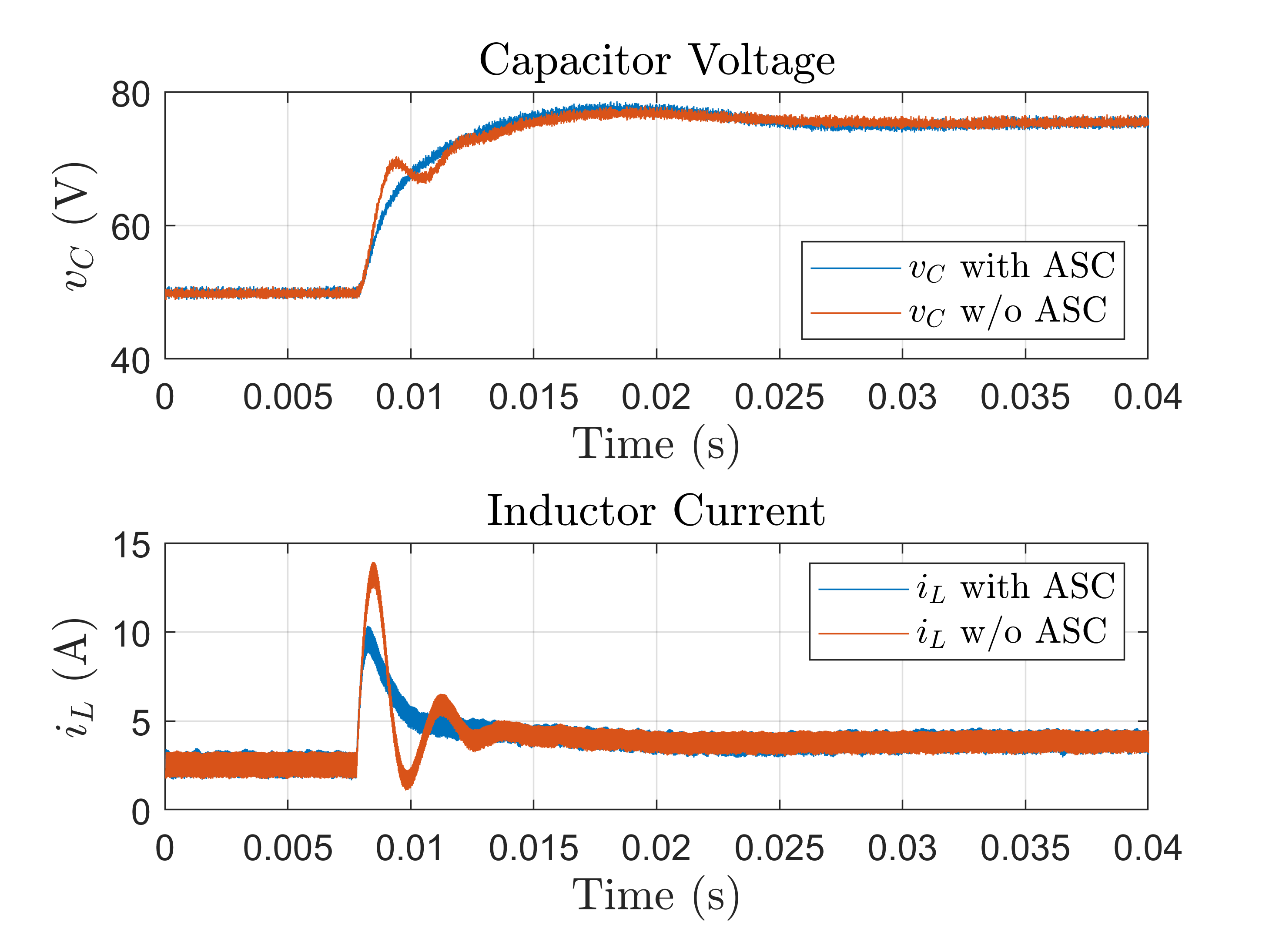}
		\caption{\label{fig:exp_analysis} Comparison of the experimental results for a buck converter with and without approximate sensitivity conditioning.}
	\end{center}
\end{figure}

\section{Conclusion and Future Work}
In this paper, an in-depth analysis of sensitivity conditioning   control \cite{picallo2021predictive} is presented using singular perturbation theory.  The impact of the sensitivity conditioning is discussed for systems with slow and fast states/dynamics and  it  is shown that as the time scale separation between these states grow, the impact of the sensitivity term is reduced.  In addition, the implementation of sensitivity conditioning for linear systems is discussed in detail, and different cases are presented based on the number of inputs.  For most practical applications, where the number of inputs is less than the number of fast states, an exact   sensitivity conditioning implementation is not possible.  For this case, an approximate sensitivity conditioning strategy is proposed, where the   term minimizes the least square error. 

Simulation results are presented for power converter applications with cascaded inner/outer loop PI control.  In the first case, a buck converter with output capacitor voltage regulation is presented and different gains are used to test the impact of   term at different time scale separations. Next, an actively rectified PMSM is studied, with nonlinear system dynamics.  A cascaded baseline controller using field oriented control is presented and the approximate sensitivity conditioning term is implemented. Lastly, experimental results are presented for a buck converter, demonstrating the advantage of sensitivity conditioning in improving the closed loop behavior with applications in power electronics.

For future work, the sensitivity conditioning control strategy will be analyzed for more complex and large scale nonlinear systems with applications in microgrids and power systems.   \blue{Lastly, the combination of high gain feedback with sensitivity conditioning will be studied with the aim of conducting a more comprehensive analysis and achieving better performance for systems with separate time scales.}

\bibliographystyle{IEEEtran}
\bibliography{IEEEabrv,bibfile_arc}

\end{document}